%

\documentclass[11pt,letter,onecolumn]{article}
\usepackage[top=2cm, bottom=2cm, left=2cm, right=2cm]{geometry}
\usepackage{times}
\usepackage{amsmath}
\usepackage{amssymb}
\usepackage{amsthm}
\usepackage{algorithm}
\usepackage{algorithmic}
\usepackage{color, array, colortbl}
\usepackage{graphicx}
\usepackage{multirow}
\usepackage{wrapfig}

\usepackage{framed}
\usepackage{color}
\definecolor{shadecolor}{rgb}{0.9,0.9,0.9}

\newcommand{\COMMENTED}[1]{}

\newtheorem{theorem}{Theorem}[section]

\newtheorem{lemma}[theorem]{Lemma}
\newtheorem{remark}[theorem]{Remark}

\newtheorem{definition}[theorem]{Definition}

\usepackage{algorithm}
\usepackage{algorithmic}


\newcommand{\RULEZERO}{\textsc{Axiom}~0}
\newcommand{\RULEONE}{\textsc{Axiom}~1}
\newcommand{\RULETWO}{\textsc{Axiom}~2}
\newcommand{\RULETHREE}{\textsc{Axiom}~3}
\newcommand{\diam}{\textsc{Diam}}
\newcommand{\degree}{\textsc{Deg}}
\newcommand{\dist}{d}
\newcommand{\Comp}{\textsc{Comp}}
\newcommand{\map}{\textsc{Map}}
\newcommand{\cancel}[1]{}

\newcommand{\ALG}{\textsc{Alg}}

\begin{document}

\title{Misleading Stars: What Cannot Be Measured in the Internet?}

\author {
   Yvonne-Anne Pignolet$^1$, Stefan Schmid$^2$, Gilles Tredan$^2$\\
   \small $^1$ ABB Research, Switzerland; yvonne-anne.pignolet@ch.abb.com\\
   \small $^2$ Deutsche Telekom Laboratories \& TU Berlin, Germany; \{stefan,gilles\}@net.t-labs.tu-berlin.de\ }



\date{}

\maketitle

\sloppy


\begin{abstract}
Traceroute measurements are one of our main instruments to shed
light onto the structure and properties of today's complex networks
such as the Internet. This paper studies the feasibility and
infeasibility of inferring the network topology given traceroute
data from a worst-case perspective, i.e., without any probabilistic
assumptions on, e.g., the nodes' degree distribution. We attend to a
scenario where some of the routers are anonymous, and propose two
fundamental axioms that model two basic assumptions on the
traceroute data: (1) each trace corresponds to a real path in the
network, and (2) the routing paths are at most a factor $1/\alpha$
off the shortest paths, for some parameter $\alpha\in (0,1]$. In
contrast to existing literature that focuses on the cardinality of
the set of (often only minimal) inferrable topologies, we argue that
a large number of possible topologies alone is often unproblematic,
as long as the networks have a similar structure. We hence seek to
characterize the set of topologies inferred with our axioms. We
introduce the notion of star graphs whose colorings capture the
differences among inferred topologies; it also allows us to
construct inferred topologies explicitly. We find that in general,
inferrable topologies can differ significantly in many important
aspects, such as the nodes' distances or the number of triangles.
These negative results are complemented by a discussion of a
scenario where the trace set is best possible, i.e., ``complete''.
It turns out that while some properties such as the node degrees are
still hard to measure, a complete trace set can help to determine
global properties such as the connectivity.
\end{abstract}

\clearpage

\cancel{
\section{Blackboard: Stefan will do on Sunday..}

\begin{enumerate}
\item do gilles dictated corrections
\item make number of anonymous nodes capital A?
\item stars in trace always numbered? or not?
\item $a$ is not a nice variable for number of stars: maybe $s$ is
better, fully explored network vs fully explored trace, ...
\item in G* nodes are stars, not mapped stars
\item curly brackets
\item make consistent: generable vs inferrable, abcd vs uvw, star vs
anonymous (trace vs node), vertex vs node, link vs edge, named vs
non-anonymous, $*$s vs stars, diam vs $\diam$, dist() vs d() etc.
\end{enumerate}
}

\clearpage

\section{Introduction}

Surprisingly little is known about the structure of many important
complex networks such as the Internet. One reason is the inherent
difficulty of performing accurate, large-scale and preferably
synchronous measurements from a large number of different vantage
points. Another reason are privacy and information hiding issues:
for example, network providers may seek to hide the details of their
infrastructure to avoid tailored attacks.

Since knowledge of the network characteristics is crucial for many
applications (e.g., \emph{RMTP}~\cite{rmtp}, or
\emph{PaDIS}~\cite{padis}), the research community implements
measurement tools to analyze at least the main properties of the
network. The results can then, e.g., be used to design more
efficient network protocols in the future.

This paper focuses on the most basic characteristic of the network:
its \emph{topology}. The classic tool to study topological
properties is \emph{traceroute}. Traceroute allows us to collect
traces from a given source node to a set of specified destination
nodes. A trace between two nodes contains a sequence of identifiers
describing the route traveled by the packet. However, not every node
along such a path is configured to answer with its identifier.
Rather, some nodes may be \emph{anonymous} in the sense that they
appear as stars (`$*$') in a trace. Anonymous nodes exacerbate the
exploration of a topology because already a small number of
anonymous nodes may increase the spectrum of inferrable topologies
that correspond to a trace set $\mathcal{T}$.

This paper is motivated by the observation that the mere number of
inferrable topologies alone does not contradict the usefulness or
feasibility of topology inference; if the set of inferrable
topologies is homogeneous in the sense that that the different
topologies share many important properties, the generation of all
possible graphs can be avoided: an arbitrary representative may
characterize the underlying network accurately. Therefore, we
identify important topological metrics such as diameter or maximal
node degree and examine how ``close'' the possible inferred
topologies are with respect to these metrics.

\subsection{Related Work}

Arguably one of the most influential measurement studies on the
Internet topology was conducted by the Faloutsos
brothers~\cite{faloutsos} who show that the Internet exhibits a
skewed structure: the nodes' out-degree follows a power-law
distribution. Moreover, this property seems to be invariant over
time. These results complement discoveries of similar distributions
of communication traffic which is often self-similar, and of the
topologies of natural networks such as human respiratory systems.
This property allows us to give good predictions not only on node
degree distributions but also, e.g., on the expected number of nodes
at a given hop-distance. Since~\cite{faloutsos} was published, many
additional results have been obtained, e.g., by conducting a
distributed computing approach to increase the number of measurement
points~\cite{shavitt-science}. However, our understanding remains
preliminary, and the topic continues to attract much attention from
the scientific communities. In contrast to these measurement
studies, we pursue a more formal approach, and a complete review of
the empirical results obtained over the last years is beyond the
scope of this paper.

In the field of \emph{network tomography}, topologies are explored
using pairwise end-to-end measurements, without the cooperation of
nodes along these paths. This approach is quite flexible and
applicable in various contexts, e.g., in social
networks~\cite{sigmetrics11}. For a good discussion of this approach
as well as results for a routing model along shortest and second
shortest paths see~\cite{sigmetrics11}. For
example,~\cite{sigmetrics11} shows that for sparse random graphs, a
relatively small number of cooperating participants is sufficient to
discover a network fairly well.

The classic tool to discover Internet topologies is
traceroute~\cite{traceroutedata}. Unfortunately, there are several
problems with this approach that render topology inference
difficult, such as \emph{aliasing} or \emph{load-balancing}, which
has motivated researchers to develop new tools such as \emph{Paris
Traceroute}~\cite{paristraceroute,jsac06}. Another complication
stems from the fact that routers may appear as stars in the trace
since they are overloaded or since they are configured not to send
out any ICMP responses. The lack of complete information in the
trace set renders the accurate characterization of Internet
topologies difficult.

This paper attends to the problem of anonymous nodes and assumes a
conservative, ``worst-case'' perspective that does not rely on any
assumptions on the underlying network. There are already several
works on the subject. Yao et al.~\cite{infocom03} initiated the
study of possible candidate topologies for a given trace set and
suggested computing the \emph{minimal topology}, that is, the
topology with the minimal number of anonymous nodes, which turns out
to be NP-hard. Consequently, different heuristics have been
proposed~\cite{gunes,jsac06}.

Our work is motivated by a series of papers by Acharya and Gouda.
In~\cite{sss09}, a network tracing theory model is introduced where
nodes are ``irregular'' in the sense that each node appears in at
least one trace with its real identifier. In~\cite{icdcn10},
hardness results are derived for this model. However, as pointed out
by the authors themselves, the irregular node model---where nodes
are anonymous due to high loads---is less relevant in practice and
hence they consider strictly anonymous nodes in their follow-up
studies~\cite{icdcn11}. As proved in~\cite{icdcn11}, the problem is
still hard (in the sense that there are many minimal networks
corresponding to a trace set), even with only two anonymous nodes,
symmetric routing and without aliasing.

In contrast to this line of research on cardinalities, we are
interested in the \emph{network properties}. If the inferred
topologies share the most important characteristics, the negative
results in~\cite{icdcn10,icdcn11} may be of little concern.
Moreover, we believe that a study limited to minimal topologies only
may miss important redundancy aspects of the Internet.
Unlike~\cite{icdcn10,icdcn11}, our work is constructive in the sense
that algorithms can be derived to compute inferred topologies.



\subsection{Our Contribution}

This paper initiates the study and characterization of topologies
that can be inferred from a given trace set computed with the
traceroute tool. While existing literature assuming a worst-case
perspective has mainly focused on the cardinality of minimal
topologies, we go one step further and examine specific topological
graph properties.

We introduce a formal theory of topology inference by proposing
basic axioms (i.e., assumptions on the trace set) that are used to
guide the inference process. We present a novel and we believe
appealing definition for the isomorphism of inferred topologies
which is aware of traffic paths; it is motivated by the observation
that although two topologies look equivalent up to a renaming of
anonymous nodes, the same trace set may result in different paths.
Moreover, we initiate the study of two extremes: in the first
scenario, we only require that each link appears at least once in
the trace set; interestingly, however, it turns out that this is
often not sufficient, and we propose a ``best case'' scenario where
the trace set is, in some sense, \emph{complete}: it contains paths
between all pairs of nodes.

The main result of the paper is a negative one. It is shown that
already a small number of anonymous nodes in the network renders
topology inference difficult. In particular, we prove that in
general, the possible inferrable topologies differ in many crucial
aspects.

We introduce the concept of the \emph{star graph} of a trace set
that is useful for the characterization of inferred topologies. In
particular, colorings of the star graphs allow us to constructively
derive inferred topologies. (Although the general problem of
computing the set of inferrable topologies is related to NP-hard
problems such as \emph{minimal graph coloring} and \emph{graph
isomorphism}, some important instances of inferrable topologies can
be computed efficiently.) The minimal coloring (i.e., the chromatic
number) of the star graph defines a lower bound on the number of
anonymous nodes from which the stars in the traces could originate
from. And the number of possible colorings of the star graph---a
function of the \emph{chromatic polynomial} of the star
graph---gives an upper bound on the number of inferrable topologies.
We show that this bound is tight in the sense that there are
situation where there indeed exist so many inferrable topologies.
Especially, there are problem instances where the cardinality of the
set of inferrable topologies equals the \emph{Bell number}. This
insight complements (and generalizes to arbitrary, not only minimal,
inferrable topologies) existing cardinality results.

Finally, we examine the scenario of \emph{fully explored networks}
for which ``complete'' trace sets are available. As expected,
inferrable topologies are more homogenous and can be characterized
well with respect to many properties such as node distances.
However, we also find that other properties are inherently difficult
to estimate. Interestingly, our results indicate that full
exploration is often useful for global properties (such as
connectivity) while it does not help much for more local properties
(such as node degree).

\subsection{Organization}

The remainder of this paper is organized as follows. Our theory of
topology inference is introduced in Section~\ref{sec:model}. The
main contribution is presented in Sections~\ref{sec:inferrable}
and~\ref{sec:fullyexp} where we derive bounds for general trace sets
and fully explored networks, respectively. In
Section~\ref{sec:conclusion}, the paper concludes with a discussion
of our results and directions for future research. Due to space
constraints, some proofs are moved to the appendix.

%

\section{Model}\label{sec:model}

Let $\mathcal{T}$ denote the set of traces obtained from probing
(e.g., by traceroute) a (not necessarily connected and undirected)
network $G_0=(V_0,E_0)$ with \emph{nodes} or \emph{vertices} $V_0$
(the set of routers) and \emph{links} or \emph{edges} $E_0$. We
assume that $G_0$ is static during the probing time (or that probing
is instantaneous). Each trace $T(u,v)\in\mathcal{T}$ describes a
path connecting two nodes $u,v\in V_0$; when $u$ and $v$ do not
matter or are clear from the context, we simply write $T$. Moreover,
let $d_T(u,v)$ denote the distance (number of hops) between two
nodes $u$ and $v$ in trace $T$. We define $d_{G_0}(u,v)$ to be the
corresponding shortest path distance in $G_0$. Note that a trace
between two nodes $u$ and $v$ may not describe the shortest path
between $u$ and $v$ in $G_0$.

The nodes in $V_0$ fall into two categories: \emph{anonymous} nodes
and \emph{non-anonymous} (or shorter: \emph{named}) nodes.
Therefore, each trace $T\in \mathcal{T}$ describes a sequence of
symbols representing anonymous and non-anonymous nodes. We make the
natural assumption that the first and the last node in each trace
$T$ is non-anonymous. Moreover, we assume that traces are given in a
form where non-anonymous nodes appear with a unique, anti-aliased
identifier (i.e., the multiple IP addresses corresponding to
different interfaces of a node are resolved to one identifier); an
anonymous node is represented as $*$ (``star'') in the traces. For
our formal analysis, we assign to each star in a trace set
$\mathcal{T}$ a unique identifier $i$: $*_i$. (Note that except for
the numbering of the stars, we allow identical copies of $T$ in
$\mathcal{T}$, and we do not make any assumptions on the
implications of identical traces: they may or may not describe the
same paths.) Thus, a trace $T\in \mathcal{T}$ is a sequence of
symbols taken from an alphabet $\Sigma=\mathcal{ID}\cup
\left(\bigcup_i *_i\right)$, where $\mathcal{ID}$ is the set of
non-anonymous node identifiers (IDs): $\Sigma$ is the union of the
(anti-aliased) non-anonymous nodes and the set of all stars (with
their unique identifiers) appearing in a trace set. The main
challenge in topology inference is to determine which stars in the
traces may originate from which anonymous nodes.

Henceforth, let $n=|\mathcal{ID}|$ denote the number of
non-anonymous nodes and let $s=\left|\bigcup_i
*_i\right|$ be the number of stars in $\mathcal{T}$; similarly, let
$a$ denote the number of anonymous nodes in a topology.
Let $N=n+s=|\Sigma|$ be the total number of symbols
occurring in $\mathcal{T}$.

Clearly, the process of topology inference depends on the
assumptions on the measurements. In the following, we postulate the
fundamental axioms that guide the reconstruction. First, we make the
assumption that each link of $G_0$ is visited by the measurement
process, i.e., it appears as a transition in the trace set
$\mathcal{T}$. In other words, we are only interested in inferring
the (sub-)graph for which measurement data is available.
\begin{shaded}
\RULEZERO\ (\emph{Complete Cover}): Each edge of $G_0$ appears at
least once in some trace in $\mathcal{T}$.
\end{shaded}

The next fundamental axiom assumes that traces always represent
paths on $G_0$.
\begin{shaded}
\RULEONE\ (\emph{Reality Sampling}): For every trace $T\in
\mathcal{T}$, if the distance between two symbols $\sigma_1,\sigma_2
\in T$ is $d_T(\sigma_1,\sigma_2)=k$, then there exists a path
(i.e., a walk without cycles) of length $k$ connecting two (named or
anonymous) nodes $\sigma_1$ and $\sigma_2$ in $G_0$.
\end{shaded}
The following axiom captures the consistency of the routing protocol
on which the traceroute probing relies. In the current Internet,
policy routing is known to have in impact both on the route
length~\cite{tangmunarunkit2002impact} and on the convergence
time~\cite{impact-roger}.
\begin{shaded}
\RULETWO\ (\emph{$\alpha$-(Routing) Consistency}): There exists an
$\alpha \in (0,1]$ such that, for every trace $T\in \mathcal{T}$, if
$d_T(\sigma_1,\sigma_2)=k$ for two entries $\sigma_1,\sigma_2$ in
trace $T$, then the shortest path connecting the two (named or
anonymous) nodes corresponding to $\sigma_1$ and $\sigma_2$ in $G_0$
has distance at least $\lceil \alpha k \rceil$.
\end{shaded}
Note that if $\alpha=1$, the routing is a shortest path routing.
Moreover, note that if $\alpha=0$, there can be loops in the paths,
and there are hardly any topological constraints, rendering almost
any topology inferrable. (For example, the complete graph with one
anonymous router is always a solution.)

A natural axiom to merge traces is the following.
\begin{shaded} \RULETHREE\ (\emph{Trace Merging}): For two
traces $T_1,T_2 \in \mathcal{T}$ for which $\exists
\sigma_1,\sigma_2,\sigma_3$, where $\sigma_2$ refers to a named
node, such that $d_{T_1}(\sigma_1,\sigma_2)=i$ and
$d_{T_2}(\sigma_2,\sigma_3)=j$, it holds that the distance between
two nodes $u$ and $v$ corresponding to $\sigma_1$ and $\sigma_2$,
respectively, in $G_0$, is at most $d_{G_0}(\sigma_1,\sigma_3)\leq
i+j$.
\end{shaded}

Any topology $G$ which is consistent with these axioms (when applied
to $\mathcal{T}$) is called \emph{inferrable} from $\mathcal{T}$.
\begin{definition}[Inferrable Topologies]
A topology $G$ is ($\alpha$-consistently) \emph{inferrable} from a
trace set $\mathcal{T}$ if axioms \RULEZERO, \RULEONE, \RULETWO\
(with parameter $\alpha$), and \RULETHREE\ are fulfilled.
\end{definition}

We will refer by $\mathcal{G}_{\mathcal{T}}$ to the set of
topologies inferrable from $\mathcal{T}$. Please note the following
important observation.

\begin{remark}
While we generally have that $G_0 \in \mathcal{G}_{\mathcal{T}}$,
since $\mathcal{T}$ was generated from $G_0$ and \RULEZERO,
\RULEONE, \RULETWO\, and \RULETHREE\ are fulfilled by definition,
there can be situations where an $\alpha$-consistent trace set for
$G_0$ contradicts \RULEZERO: some edges may not appear in
$\mathcal{T}$. If this is the case, we will focus on the inferrable
topologies containing the links we know, even if $G_0$ may have
additional, hidden links that cannot be explored due to the high
$\alpha$ value.
\end{remark}

The main objective of a topology inference algorithm $\ALG$ is to
compute topologies which are consistent with these axioms.
Concretely, $\ALG$'s input is the trace set $\mathcal{T}$ together
with the parameter $\alpha$ specifying the assumed routing
consistency. Essentially, the goal of any topology inference
algorithm $\ALG$ is to compute a mapping of the symbols $\Sigma$
(appearing in $\mathcal{T}$) to nodes in an inferred topology $G$;
or, in case the input parameters $\alpha$ and $\mathcal{T}$ are
contradictory, reject the input. This mapping of symbols to nodes
implicitly describes the edge set of $G$ as well: the edge set is
unique as all the transitions of the traces in $\mathcal{T}$ are now
unambiguously tied to two nodes.

\begin{wrapfigure}{r}{0.42\textwidth}
        \vspace*{-.4cm}
    \includegraphics[width=0.38\textwidth]{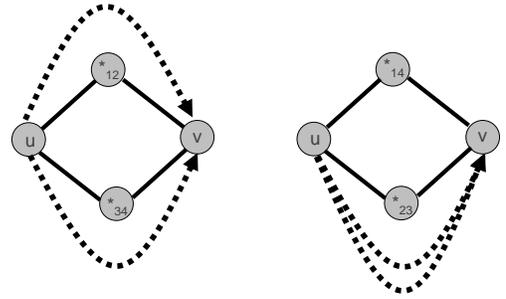}
    \caption{\bf Two non-isomorphic inferred
topologies, i.e., different mapping functions lead to these
topologies.}\label{fig:isomorph}
    \vspace*{-0.2cm}
\end{wrapfigure}

So far, we have ignored an important and non-trivial question: When
are two topologies $G_1,G_2\in \mathcal{G}_{\mathcal{T}}$ different
(and hence appear as two independent topologies in
$\mathcal{G}_{\mathcal{T}}$)? In this paper, we pursue the following
approach: We are not interested in purely topological isomorphisms,
but we care about the identifiers of the non-anonymous nodes, i.e.,
we are interested in the locations of the non-anonymous nodes and
their distance to other nodes. For anonymous nodes, the situation is
slightly more complicated: one might think that as the nodes are
anonymous, their ``names'' do not matter. Consider however the
example in Figure~\ref{fig:isomorph}: the two inferrable topologies
have two anonymous nodes, once where $\{*_1,*_2\}$ plus
$\{*_3,*_4\}$ are merged into one node each in the inferrable
topology and once where $\{*_1,*_4\}$ plus $\{*_2,*_3\}$ are merged
into one node each in the inferrable topology. In this paper, we
regard the two topologies as different, for the following reason:
Assume that there are two paths in the network, one
$u\rightsquigarrow
*_2 \rightsquigarrow v$ (e.g., during day time) and one
$u\rightsquigarrow *_3 \rightsquigarrow v$ (e.g., at night);
clearly, this traffic has different consequences and hence we want
to be able to distinguish between the two topologies described
above. In other words, our notion of isomorphism of inferred
topologies is \emph{path-aware}.

It is convenient to introduce the following $\map$ function.
Essentially, an inference algorithm computes such a mapping.
\begin{definition}[Mapping Function $\map$]\label{def:topinference}
Let $G=(V,E) \in \mathcal{G}_{\mathcal{T}}$ be a topology inferrable
from $\mathcal{T}$. A topology inference algorithm describes a
surjective mapping function $\map: \Sigma \to V$. For the set of
non-anonymous nodes in $\Sigma$, the mapping function is bijective;
and each star is mapped to exactly one node in $V$, but multiple
stars may be assigned to the same node. Note that for any $\sigma
\in \Sigma$, $\map(\sigma)$ uniquely identifies a node $v\in V$.
More specifically, we assume that $\map$ assigns labels to the nodes
in $V$: in case of a named node, the label is simply the node's
identifier; in case of anonymous nodes, the label is $*_{\beta}$,
where $\beta$ is the concatenation of the \emph{sorted} indices of
the stars which are merged into node $*_{\beta}$.
\end{definition}
With this definition, two topologies $G_1,G_2\in
\mathcal{G}_{\mathcal{T}}$ differ if and only if they do not
describe the identical ($\map$-) labeled topology. We will use this
$\map$ function also for $G_0$, i.e., we will write $\map(\sigma)$
to refer to a symbol $\sigma$'s corresponding node in $G_0$.

In the remainder of this paper, we will often assume that \RULEZERO\
is given. Moreover, note that \RULETHREE\ is redundant. Therefore,
in our proofs, we will not explicitly cover \RULEZERO, and it is
sufficient to show that \RULEONE\ holds to prove that \RULETHREE\ is
satisfied.
\begin{lemma}
\RULEONE\ implies \RULETHREE.
\end{lemma}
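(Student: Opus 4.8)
The plan is to show that whenever the hypothesis of \RULETHREE\ is satisfied for two traces $T_1, T_2$, the conclusion follows directly from \RULEONE\ applied to each trace separately, together with the fact that $\sigma_2$ refers to a \emph{named} node (so it denotes the same physical node in both traces). The key observation is that \RULEONE\ guarantees that each trace corresponds to a real cycle-free walk in $G_0$; a path realizing a sub-distance in a trace is thus a genuine path in $G_0$.

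First I would unpack the hypothesis: we are given $\sigma_1, \sigma_2, \sigma_3$ with $\sigma_2$ named, $d_{T_1}(\sigma_1,\sigma_2)=i$, and $d_{T_2}(\sigma_2,\sigma_3)=j$. Writing $u=\map(\sigma_1)$, $w=\map(\sigma_2)$, $x=\map(\sigma_3)$ for the corresponding nodes in $G_0$, the crucial point is that since $\sigma_2$ is named, the $\map$ function is bijective on it, so $\sigma_2$ maps to the \emph{same} node $w$ in both $T_1$ and $T_2$. Applying \RULEONE\ to $T_1$ yields a path of length $i$ connecting $u$ and $w$ in $G_0$, and applying \RULEONE\ to $T_2$ yields a path of length $j$ connecting $w$ and $x$ in $G_0$.

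Next I would concatenate these two paths at the common node $w$. This produces a walk of length $i+j$ from $u$ to $x$ in $G_0$. Since $d_{G_0}(u,x)$ is defined as the \emph{shortest} path distance in $G_0$, and a shortest path is no longer than any walk connecting the two endpoints, we immediately obtain $d_{G_0}(u,x) \leq i+j$, which is exactly the conclusion of \RULETHREE\ (reading $\sigma_3$ for the destination, so that the stated bound $d_{G_0}(\sigma_1,\sigma_3)\leq i+j$ holds).

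I expect the only subtlety — and the one point that genuinely requires the ``named node'' assumption — to be the identification of $\sigma_2$'s image across the two distinct traces: for an anonymous symbol one could not assert that the stars in $T_1$ and $T_2$ map to the same node of $G_0$, and the concatenation argument would break down. Everything else is routine: the concatenated object is a walk (possibly with repeated vertices), but this is harmless since we only need an upper bound on the shortest-path distance, not a cycle-free path. Hence no cycle-removal step is even necessary, and the lemma follows.
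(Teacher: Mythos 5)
Your proof is correct and follows essentially the same route as the paper's: apply \RULEONE\ to each trace to obtain a path of length $i$ from $\sigma_1$ to $\sigma_2$ and one of length $j$ from $\sigma_2$ to $\sigma_3$, concatenate them at the common named node, and observe that the shortest-path distance is at most the length of this walk. Your explicit remark that the ``named'' hypothesis is what lets the two occurrences of $\sigma_2$ be identified as the same node is a point the paper leaves implicit, but the argument is the same.
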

\begin{proof}
Let $\mathcal{T}$ be a trace set, and $G
\in\mathcal{G}_{\mathcal{T}}$. Let $\sigma_1,\sigma_2,\sigma_3 $
s.t.~$\exists T_1,T_2 \in \mathcal{T}$ with $\sigma_1 \in T_1,
\sigma_3\in T_2$ and $\sigma_2 \in T_1 \cap T_2$. Let
$i=d_{T_1}(\sigma_1,\sigma_2)$ and $j=d_{T_2}(\sigma_1,\sigma_3)$.
Since any inferrable topology $G$ fulfills \RULEONE, there is a path
$\pi_1$ of length at most $i$ between the nodes corresponding to
$\sigma_1$ and $\sigma_2$ in $G$ and a path $\pi_2$ of length at
most $j$ between the nodes corresponding to $\sigma_2$ and
$\sigma_3$ in $G$. The combined path can only be shorter, and hence
the claim follows.
\end{proof}


\section{Inferrable Topologies}\label{sec:inferrable}

What insights can be obtained from topology inference with minimal
assumptions, i.e., with our axioms? Or what is the structure of the
inferrable topology set $\mathcal{G}_{\mathcal{T}}$? We first make
some general observations and then examine different graph metrics
in more detail.

\subsection{Basic Observations}

Although the generation of the entire topology set
$\mathcal{G}_{\mathcal{T}}$ may be computationally hard, some
instances of $\mathcal{G}_{\mathcal{T}}$ can be computed
efficiently. The simplest possible inferrable topology is the
so-called \emph{canonic graph} $G_C$: the topology which assumes
that all stars in the traces refer to different anonymous nodes. In
other words, if a trace set $\mathcal{T}$ contains
$n=|\mathcal{ID}|$ named nodes and $s$ stars, $G_C$ will contain
$|V(G_C)|=N=n+s$ nodes.

\begin{definition}[Canonic Graph $G_C$]\label{def:canonic}
  The \emph{canonic graph} is defined by $G_C(V_C,E_C)$ where
  $V_C=\Sigma$ is the set of (anti-aliased) nodes appearing in $\mathcal{T}$
  (where each star is considered a unique anonymous node)
  and where $\{\sigma_1,\sigma_2\} \in E_C \Leftrightarrow
  \exists T \in \mathcal{T} ,  T=(\ldots,\sigma_1,
  \sigma_2,\ldots)$, i.e., $\sigma_1$ follows after $\sigma_2$
  in some trace $T$ ($\sigma_1,\sigma_2\in T$ can be either
  non-anonymous nodes or stars). Let $d_C(\sigma_1,\sigma_2)$ denote
  the \emph{canonic distance} between two nodes, i.e., the length of a
  shortest path in $G_C$ between the nodes $\sigma_1$ and $\sigma_2$.
\end{definition}

Note that $G_C$ is indeed an inferrable topology. In this case,
$\map: \Sigma \rightarrow \Sigma$ is the identity function. The
proof appears in the appendix.
\begin{theorem}\label{thm:gc-inferable}
  $G_C$ is inferrable from $\mathcal{T}$.
\end{theorem}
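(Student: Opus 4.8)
The plan is to verify directly that the canonic graph $G_C$, together with the identity mapping $\map:\Sigma\to\Sigma$, satisfies each of the axioms required by the definition of inferrability. Since the excerpt establishes that \RULETHREE\ is implied by \RULEONE, and that \RULEZERO\ is assumed to hold, it suffices to check \RULEZERO, \RULEONE, and \RULETWO. The guiding intuition is that $G_C$ is the ``most spread out'' topology: every star becomes its own distinct node and edges are exactly the consecutive transitions seen in traces, so the graph is essentially a faithful union of the traces themselves. This should make the reality and consistency constraints easy to meet, because distances in $G_C$ cannot collapse below what the traces dictate.

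First I would dispatch \RULEZERO. By the very construction in Definition~\ref{def:canonic}, an edge $\{\sigma_1,\sigma_2\}$ is placed in $E_C$ precisely when $\sigma_1,\sigma_2$ appear consecutively in some trace. Every edge of $G_0$ that the axiom requires to be covered appears as such a transition in $\mathcal{T}$, and each such transition is by definition an edge of $G_C$; hence \RULEZERO\ holds. Next I would handle \RULEONE: given a trace $T$ and two symbols $\sigma_1,\sigma_2\in T$ at trace-distance $d_T(\sigma_1,\sigma_2)=k$, the segment of $T$ between them is a sequence of $k$ consecutive transitions, each of which is an edge of $G_C$ by construction. This yields a walk of length $k$ in $G_C$ connecting $\map(\sigma_1)$ and $\map(\sigma_2)$; since the symbols between them along $T$ are all distinct nodes in $G_C$ (each star is unique), this walk is in fact a simple path of length $k$, as required.

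The step that needs the most care is \RULETWO, the $\alpha$-consistency condition, since it imposes a \emph{lower} bound $\lceil\alpha k\rceil$ on the shortest-path distance $d_C(\sigma_1,\sigma_2)$, and establishing lower bounds on distances is where an argument can go wrong. The key observation I would exploit is that $\mathcal{T}$ was generated from $G_0$, which itself is $\alpha$-consistent, so $d_{G_0}(\map(\sigma_1),\map(\sigma_2))\ge\lceil\alpha k\rceil$. The crux is then to relate canonic distances back to distances in $G_0$: because $G_C$ splits every anonymous node of $G_0$ into possibly several independent stars and never identifies two distinct symbols, there is a natural distance-non-increasing map from $G_C$ onto $G_0$ (sending each star to the $G_0$-node it originated from and each named node to itself). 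Consequently $d_{G_0}(\map(\sigma_1),\map(\sigma_2))\le d_C(\sigma_1,\sigma_2)$, and chaining the two inequalities gives $d_C(\sigma_1,\sigma_2)\ge\lceil\alpha k\rceil$, which is exactly \RULETWO.

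Finally, having verified \RULEZERO, \RULEONE, and \RULETWO\ for $G_C$ under the identity mapping, I would invoke the earlier lemma that \RULEONE\ implies \RULETHREE\ to conclude that all four axioms hold, so $G_C\in\mathcal{G}_{\mathcal{T}}$ and the theorem follows. The main obstacle I anticipate is making the ``distance-non-increasing projection from $G_C$ to $G_0$'' argument fully rigorous, in particular checking that every $G_C$-edge maps to an edge (or a self-identification) in $G_0$ so that canonic paths project to walks of no greater length; the rest is essentially bookkeeping over the trace structure.
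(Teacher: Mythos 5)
Your proposal is correct and follows essentially the same route as the paper: \RULEZERO\ is immediate from the construction, \RULEONE\ is obtained by reading off the trace segment as a path in $G_C$, and \RULETWO\ is derived by projecting $G_C$-paths down to $G_0$ and invoking the $\alpha$-consistency of $G_0$ itself. The only difference is presentational: the paper argues \RULETWO\ by contradiction, decomposing a hypothetical short $G_C$-path into trace segments and bounding each via \RULEONE\ for $G_0$, whereas you state the same fact directly as a distance-non-increasing projection $d_{G_0}(\map(\sigma_1),\map(\sigma_2))\leq d_C(\sigma_1,\sigma_2)$ applied edge by edge; both hinge on the identical observation that every $G_C$-edge comes from a trace transition and hence corresponds to an edge of $G_0$.
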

$G_C$ can be computed efficiently from $\mathcal{T}$: represent each
non-anonymous node and star as a separate node, and for any pair of
consecutive entries (i.e., nodes) in a trace, add the corresponding
link. The time complexity of this construction is linear in the size
of $\mathcal{T}$.

\cancel{ Next we show that the constraints on the routing paths
gives a necessary condition when two stars in different traces
cannot represent the same node in $G_0$.
\begin{lemma}\label{eqn:lemmaInference}
Assume any $\alpha>0$. If there is a trace $T\in\mathcal{T}$
containing two stars $*_1,*_2\in T$, or if there are two traces
$\exists T_1, T_2 \in \mathcal{T}$ with $d_{T_1}(v,*_1)<\lceil
\alpha\cdot d_{T_2}(v,*_2)\rceil$ for some non-anonymous node $v\in
V_0$, then $\map(*_1) \in G_0 \neq \map(*_2) \in G_0$. If $T_1 =
T_2$, $\map(*_1)\neq \map(*_2)$.
\end{lemma}
\begin{proof}
If the stars occur in the same trace, the claim follows trivially
due to the loop-free routing ($\alpha>0$). The second proof is by
contradiction. Assume $\map(*_1)= \map(*_2)$ represent the same node
$w$ of $G_0$. Then we know from \RULETWO\ that
$d_{G_0}(v,w)\geq\lceil \alpha\cdot d_{T_2}(v,w) \rceil$ and by
construction (and \RULEONE) $d_{G_0}(v,w)\leq d_{T_1}(v,w) $, which
contradicts $d_{T_1}(v,w)<\lceil \alpha\cdot \dist(v,w)\rceil$.
\end{proof}
}


With the definition of the canonic graph, we can derive the
following lemma which establishes a necessary condition when two
stars cannot represent the same node in $G_0$ from constraints on
the routing paths. This is useful for the characterization of
inferred topologies.
\begin{lemma}\label{lem:lemmaInferencemulti}
Let $*_1,*_2$ be two stars occurring in some traces in
$\mathcal{T}$. $*_1,*_2$ cannot be mapped to the same node, i.e.,
$\map(*_1) \neq \map(*_2)$, without violating the axioms in the
following conflict situations:
\begin{itemize}
\item[(i)] if $*_1\in T_1$ and $*_2\in T_2$, and $T_1$
describes a too long path between anonymous node $\map(*_1)$ and
non-anonymous node $u$, i.e., $ \lceil \alpha \cdot
d_{T_1}(*_1,u)\rceil
> d_{C}(u,*_2)$.
\item[(ii)] if $*_1\in T_1$ and $*_2\in T_2$, and there exists a trace $T$ that contains a path between two non-anonymous nodes $u$
and $v$ and $\lceil \alpha \cdot d_{T}(u,v)\rceil >
d_{C}(u,*_1)+d_{C}(v,*_2). $
\end{itemize}
\end{lemma}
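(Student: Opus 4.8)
The plan is to argue by contradiction in both cases. Suppose $*_1$ and $*_2$ were mapped to a common node $w:=\map(*_1)=\map(*_2)$ in some inferrable topology $G=(V,E)\in\mathcal{G}_{\mathcal{T}}$; I would then exhibit a pair of distance bounds on $w$ that cannot hold simultaneously. The lower bounds come from \RULETWO\ (routing consistency forces a graph distance to be at least $\lceil\alpha\cdot(\text{trace distance})\rceil$), while the upper bounds come from the fact that distances in $G$ never exceed the corresponding canonic distances $d_{C}$.

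Before handling the two cases, I would isolate the one structural fact that drives everything: for every inferrable $G$ and all symbols $\sigma_1,\sigma_2\in\Sigma$,
\[
 d_{G}(\map(\sigma_1),\map(\sigma_2))\ \le\ d_{C}(\sigma_1,\sigma_2).
\]
This holds because, as observed after Definition~\ref{def:topinference}, the edge set of any $G$ is induced by exactly the trace transitions of $\mathcal{T}$; hence $\map$ is a graph homomorphism from the canonic graph $G_C$ onto $G$, since every edge $\{\sigma_1,\sigma_2\}$ of $G_C$ maps to the edge $\{\map(\sigma_1),\map(\sigma_2)\}$ of $G$. A homomorphism sends a shortest $G_C$-path of length $d_{C}(\sigma_1,\sigma_2)$ to a walk of the same length in $G$, and a walk of length $\ell$ witnesses $d_G\le\ell$; this yields the displayed inequality. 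Since named nodes map bijectively to themselves, I can freely identify $u$ with $\map(u)$ throughout.

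For case (i), the node $w$ plays two roles. As $w=\map(*_1)$ with $*_1\in T_1$, applying \RULETWO\ to the pair $(*_1,u)$ in $T_1$ inside $G$ gives $d_{G}(w,u)\ge \lceil\alpha\cdot d_{T_1}(*_1,u)\rceil$. As $w=\map(*_2)$, the structural fact gives the opposing bound $d_{G}(w,u)=d_{G}(\map(*_2),\map(u))\le d_{C}(u,*_2)$. Chaining the two inequalities produces $\lceil\alpha\cdot d_{T_1}(*_1,u)\rceil\le d_{C}(u,*_2)$, which directly contradicts the hypothesis of (i).

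For case (ii), I would instead route through $w$ to bound the distance between the two named nodes $u$ and $v$ of the trace $T$. The triangle inequality in $G$ together with the structural fact gives
\[
 d_{G}(u,v)\ \le\ d_{G}(u,w)+d_{G}(w,v)\ \le\ d_{C}(u,*_1)+d_{C}(v,*_2),
\]
using $w=\map(*_1)$ for the first summand and $w=\map(*_2)$ for the second. On the other hand, \RULETWO\ applied to $(u,v)$ in $T$ yields $d_{G}(u,v)\ge\lceil\alpha\cdot d_{T}(u,v)\rceil$, so $\lceil\alpha\cdot d_{T}(u,v)\rceil\le d_{C}(u,*_1)+d_{C}(v,*_2)$, contradicting the hypothesis of (ii). The only nontrivial step is the structural distance-contraction fact; once it is in place, both cases reduce to a single application of \RULETWO\ against an upper bound assembled from canonic distances. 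I therefore expect the homomorphism/contraction argument to be the main thing to state carefully, with the remainder being routine.
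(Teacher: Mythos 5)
Your proof is correct and follows the same basic strategy as the paper's: assume the merge, pit the \RULETWO\ lower bound $\lceil\alpha\cdot d_T\rceil$ against an upper bound assembled from canonic distances, and derive a contradiction. The one place you genuinely improve on the paper is that you make the contraction fact $d_G(\map(\sigma_1),\map(\sigma_2))\le d_C(\sigma_1,\sigma_2)$ explicit and carry out case~(ii) entirely inside the merged topology $G$. The paper instead writes the case~(ii) step as a triangle inequality $d_C(u,w)+d_C(v,w)\ge d_C(u,v)$ ``in $G_C$'' and contradicts the inferrability of $G_C$; read literally this is shaky, since $w$ is a single node only after the merge, while in $G_C$ the stars $*_1$ and $*_2$ are distinct vertices and the triangle inequality would pick up an extra $d_C(*_1,*_2)$ term. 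Your version — triangle inequality in $G$ through $w$, then \RULETWO\ applied to $G$ itself — avoids that conflation and is the cleaner way to write the same argument. The only nit: when you call $\map$ a homomorphism $G_C\to G$ you should note that adjacent symbols cannot map to the same node (this follows from $\alpha>0$, since $d_T=1$ forces $d_G\ge 1$), or simply observe, as you do, that the image of a path is a walk, which already gives the distance bound.
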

\begin{proof}
The first proof is by contradiction. Assume $\map(*_1)= \map(*_2)$
represents the same node $v$ of $G_0$, and that $\lceil \alpha \cdot
d_{T_1}(v,u)\rceil > d_{C}(u,v)$. Then we know from \RULETWO\ that
$d_{C}(v,u)\geq d_{G_0}(v,u)\geq\lceil \alpha \cdot d_{T_1}(u,v)
\rceil
> d_{C}(v,u)$, which yields the desired contradiction.

Similarly for the second proof. Assume for the sake of contradiction
that $\map(*_1)= \map(*_2)$ represents the same node $w$ of $G_0$,
and that $\lceil \alpha \cdot d_{T}(u,v)\rceil
> d_{C}(u,w)+d_{C}(v,w)$. Due to the triangle inequality, we have that
$d_{C}(u,w)+d_{C}(v,w)\geq d_{C}(u,v)$ and hence, $\lceil \alpha
\cdot d_{T}(u,v)\rceil > d_{C}(u,v)$, which contradicts the fact
that $G_C$ is inferrable (Theorem~\ref{thm:gc-inferable}).
\end{proof}

Lemma~\ref{lem:lemmaInferencemulti} can be applied to show that a
topology is not inferrable from a given trace set because it merges
(i.e., maps to the same node) two stars in a manner that violates
the axioms. Let us introduce a useful concept for our analysis: the
\emph{star graph} that describes the conflicts between stars.
\begin{definition}[Star Graph $G_*$]\label{def:stargraph}
The \emph{star graph} $G_*(V_*,E_*)$ consists of vertices $V_*$
representing stars in traces, i.e., $V_*=\bigcup_i *_i$. Two
vertices are connected if and only if they must differ according to
Lemma~\ref{lem:lemmaInferencemulti}, i.e., $\{*_1,*_2\}\in E_*$ if
and only if at least one of the conditions of
Lemma~\ref{lem:lemmaInferencemulti} hold for $*_1,
*_2$.
\end{definition}

Note that the star graph $G_*$ is unique and can be computed
efficiently for a given trace set $\mathcal{T}$: Conditions~(i)
and~(ii) can be checked by computing $G_C$. However, note that while
$G_*$ specifies some stars which cannot be merged, the construction
is not sufficient: as Lemma~\ref{lem:lemmaInferencemulti} is based
on $G_C$, additional links might be needed to characterize the set
of inferrable and $\alpha$-consistent topologies
$\mathcal{G}_{\mathcal{T}}$ exactly. In other words, a topology $G$
obtained by merging stars that are adjacent in $G_*$ is never
inferrable ($G\not\in\mathcal{G}_{\mathcal{T}}$); however, merging
non-adjacent stars does not guarantee that the resulting topology is
inferrable.

What do star graphs look like? The answer is \emph{arbitrarily}: the
following lemma states that the set of possible star graphs is
equivalent to the class of general graphs. This claim holds for any
$\alpha$. The proof appears in the appendix.
\begin{lemma}\label{lemma:allstars}
For any graph $G=(V,E)$, there exists a trace set $\mathcal{T}$ such
that $G$ is the star graph for $\mathcal{T}$.
\end{lemma}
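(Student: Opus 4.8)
The plan is to prove Lemma~\ref{lemma:allstars} by an explicit construction: given an arbitrary target graph $G=(V,E)$ with $V=\{v_1,\dots,v_m\}$, I will build a trace set $\mathcal{T}$ whose star graph $G_*$ is exactly $G$. The idea is to introduce one star $*_i$ for each vertex $v_i \in V$, so that $V_* = \{*_1,\dots,*_m\}$ matches $V$, and then engineer traces so that the conflict relation of Lemma~\ref{lem:lemmaInferencemulti} induces precisely the edge set $E$. Since the characterization of $E_*$ is via distances in the canonic graph $G_C$ together with the routing parameter $\alpha$, I will control these distances by padding traces with auxiliary named nodes. The cleanest route is to work at a fixed $\alpha$ (the claim must hold for any $\alpha$, so I will treat $\alpha$ as given and adjust gadget lengths accordingly, e.g.\ by scaling all path lengths by a factor depending on $1/\alpha$).

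First I would realize each desired edge $\{*_i,*_j\}\in E$ by a \emph{conflict gadget}: a short trace (or pair of traces) that forces condition~(i) or~(ii) of Lemma~\ref{lem:lemmaInferencemulti} to fire for the pair $(*_i,*_j)$. For instance, using condition~(ii), I can place a dedicated named node $u_{ij}$ and $v_{ij}$ with a trace $T$ realizing a large $d_T(u_{ij},v_{ij})$ so that $\lceil \alpha\cdot d_T(u_{ij},v_{ij})\rceil$ exceeds $d_C(u_{ij},*_i)+d_C(v_{ij},*_j)$; by attaching $*_i$ close to $u_{ij}$ and $*_j$ close to $v_{ij}$ in other short traces, the canonic distances stay small while the consistency bound stays large, triggering the conflict. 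Each gadget uses fresh named nodes so that gadgets do not interfere with one another in $G_C$.

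The main obstacle — and the step I would spend the most care on — is ensuring the \emph{converse}: that no \emph{unwanted} conflicts arise. Because the conditions in Lemma~\ref{lem:lemmaInferencemulti} depend on global canonic distances $d_C(\cdot,\cdot)$, I must verify that for every non-edge $\{*_i,*_j\}\notin E$, neither condition~(i) nor~(ii) is satisfied. This requires showing that the stars and named nodes belonging to different gadgets are kept ``far apart'' in $G_C$, or that whatever short paths exist between them keep $\lceil \alpha\cdot d_T \rceil$ below the relevant sum of canonic distances. The safe design is to make each star $*_i$ appear only in the gadgets corresponding to its incident edges, and to use disjoint sets of auxiliary named nodes per gadget, connected only through a common ``hub'' structure whose distances I compute explicitly; I would then check conditions~(i) and~(ii) exhaustively over all pairs and confirm they hold iff an edge was intended.

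Finally, I would confirm that the constructed $\mathcal{T}$ is self-consistent as a trace set (each trace starts and ends with a named node, stars receive unique identifiers) and that the resulting $G_*$ under Definition~\ref{def:stargraph} equals $G$. Since the lemma only asserts existence of \emph{some} $\mathcal{T}$ realizing $G$ as its star graph — not that $G_0$ or the inferrable topologies have any particular form — I do not need $\mathcal{T}$ to be generated from a real network; I only need the axioms to be non-contradictory for the chosen $\alpha$, which follows from the existence of the canonic graph $G_C$ (Theorem~\ref{thm:gc-inferable}) as a witness inferrable topology. I expect the write-up to reduce to presenting one parametrized gadget, stating the distance bookkeeping, and invoking the iff-characterization of $E_*$.
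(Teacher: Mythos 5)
Your plan matches the paper's proof in all essentials: one star per vertex of $G$, a conflict for each edge forced via condition~(ii) of Lemma~\ref{lem:lemmaInferencemulti} by a long named-to-named trace whose endpoints are canonically close to the two stars, gadget lengths padded by a factor depending on $1/\alpha$ (the paper uses chains of length $\tau=\lceil 3/(2\alpha)-1/2\rceil$), and an exhaustive distance-table check that non-edges produce no spurious conflict --- which is indeed where all the work lies and which your sketch defers but correctly identifies as the crux. The one point to handle carefully in the write-up is that each star symbol occurs in exactly one trace, so ``$*_i$ appears in the gadgets of all its incident edges'' cannot be taken literally; it must be realized as the paper does, by giving $*_i$ a single private trace $(w_{(i,\tau)},*_i,w_{(i,\tau+1)})$ and anchoring every incident edge gadget at the named neighbor $w_{(i,\tau)}$, so that proximity to $*_i$ is achieved through canonic-graph distance rather than through repeated occurrences of the star.
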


The problem of computing inferrable topologies is related to the
vertex colorings of the star graphs. We will use the following
definition which relates a vertex coloring of $G_*$ to an inferrable
topology $G$ by contracting independent stars in $G_*$ to become one
anonymous node in $G$. For example, observe that a maximum coloring
treating every star in the trace as a separate anonymous node
describes the inferrable topology $G_C$.
\begin{definition}[Coloring-Induced Graph]
\label{def:colorinducedGraph}
  Let $\gamma$ denote a coloring of $G_*$ which assigns colors $1,\ldots,k$ to the vertices of $G_*$:
  $\gamma: V_* \rightarrow \{1,\ldots,k\}$.
 We require that $\gamma$ is a proper coloring of $G_*$, i.e., that
 different anonymous nodes are assigned different colors:
 $\{u,v\} \in E_* \Rightarrow
 \gamma(u) \neq \gamma(v)$.
 $G_{\gamma}$ is defined
 as the topology \emph{induced} by $\gamma$. $G_{\gamma}$ describes the graph $G_C$ where
 nodes of the same color are contracted: two vertices $u$ and $v$ represent the same
 node in $G_{\gamma}$, i.e., $\map(*_i)=\map(*_j)$, if and only if $\gamma(*_i)=\gamma(*_j)$.
\end{definition}


The following two lemmas establish an intriguing relationship
between colorings of $G_*$ and inferrable topologies. Also note that
Definition~\ref{def:colorinducedGraph} implies that two different
colorings of $G_*$ define two non-isomorphic inferrable topologies.

We first show that while a coloring-induced topology always fulfills
\RULEONE, the routing consistency is sacrificed. The proof appears
in the appendix.
\begin{lemma}\label{thm:bijection}
Let $\gamma$ be a proper coloring of $G_*$. The coloring induced
topology $G_{\gamma}$ is a topology fulfilling \RULETWO\ with a
routing consistency of $\alpha'$, for some positive $\alpha'$.
\end{lemma}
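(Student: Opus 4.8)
The plan is to exhibit an explicit positive constant $\alpha'$ and verify \RULETWO\ for $G_{\gamma}$ directly. A natural candidate is the smallest ``distortion'' that the contraction inflicts on any pair of symbols that actually co-occur in a trace: set
\[
\alpha' \;=\; \min_{T\in\mathcal{T}}\ \min_{\substack{\sigma_1,\sigma_2\in T\\ \sigma_1\neq\sigma_2}}\ \frac{d_{G_{\gamma}}(\map(\sigma_1),\map(\sigma_2))}{d_T(\sigma_1,\sigma_2)}.
\]
Since $\mathcal{T}$ is finite and each trace has finitely many entries, this is a minimum of finitely many rationals and is therefore attained. Each ratio is at most $1$: the images under $\map$ of two entries at trace-distance $k$ are joined by a walk of length $k$ in $G_{\gamma}$ (consecutive trace symbols are adjacent in $G_C$, and contraction preserves adjacency), so $d_{G_{\gamma}}(\map(\sigma_1),\map(\sigma_2))\le k = d_T(\sigma_1,\sigma_2)$. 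Hence, once every ratio is shown to be strictly positive (the crux), we get $\alpha'\in(0,1]$, which is the admissible range for a routing-consistency parameter.

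The key step is therefore to show that distinct symbols occurring in a common trace map to distinct vertices of $G_{\gamma}$, so that $d_{G_{\gamma}}(\map(\sigma_1),\map(\sigma_2))\ge 1$ whenever $d_T(\sigma_1,\sigma_2)\ge 1$. I would split this into three cases. If both symbols are named, $\map$ is bijective on $\mathcal{ID}$ and the claim is immediate; if one is named and one is a star, they land in different categories of vertices and again differ. The only delicate case is two stars $*_1,*_2$ of a single trace $T$: here I invoke the companion fact stated just above the lemma, namely that $G_{\gamma}$ satisfies \RULEONE. By \RULEONE\ there is a loop-free path of length $k=d_T(*_1,*_2)\ge 1$ between $\map(*_1)$ and $\map(*_2)$ in $G_{\gamma}$; a simple path of positive length has distinct endpoints, so $\map(*_1)\neq\map(*_2)$. (Equivalently, since $\alpha>0$ forbids loops, merging two stars of the same trace would force the trace's path to revisit a vertex, contradicting loop-freedom; this is precisely the conflict that makes co-trace stars adjacent in $G_*$ and forces the proper colouring $\gamma$ to colour them differently.)

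Granting positivity, every ratio in the definition of $\alpha'$ is a positive rational, so $\alpha'>0$. It then remains to read off \RULETWO: for any trace $T$ and any two entries $\sigma_1,\sigma_2$ with $d_T(\sigma_1,\sigma_2)=k$, the definition of $\alpha'$ gives $d_{G_{\gamma}}(\map(\sigma_1),\map(\sigma_2))\ge \alpha' k$, and since the left-hand side is an integer we obtain $d_{G_{\gamma}}(\map(\sigma_1),\map(\sigma_2))\ge\lceil\alpha' k\rceil$. This is exactly $\alpha'$-routing consistency, establishing the lemma.

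I expect the star--star subcase to be the only real obstacle. The literal conditions~(i) and~(ii) of Lemma~\ref{lem:lemmaInferencemulti} are phrased through the canonic distances $d_C$ and need not, for small $\alpha$, by themselves certify that two stars sharing a trace are in conflict; the honest justification is the loop-freedom argument above (equivalently, the assumed \RULEONE\ property of $G_{\gamma}$), which is what guarantees that a proper colouring of $G_*$ keeps co-trace stars apart and thereby keeps every trace-distance in $G_{\gamma}$ strictly positive.
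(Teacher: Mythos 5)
Your argument is essentially the paper's own: the appendix proof likewise reduces everything to the claim that $\map$ is injective on the symbols of each individual trace (so that trace paths remain paths in $G_{\gamma}$ and every relevant distance is at least one), after which a positive $\alpha'$ exists by finiteness of $\mathcal{T}$; you merely make the implicit minimum explicit and spell out the $\lceil\cdot\rceil$ step. The remainder of the paper's proof is a construction showing that this $\alpha'$ can be made arbitrarily small, which is additional information not required by the statement, so omitting it costs you nothing.

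The one step you should not feel reassured about is exactly the one you flag at the end, and it is a gap shared with the paper. The paper dispatches the co-trace case with the bare assertion that $\map(\sigma_1)\neq\map(\sigma_2)$ ``since $\alpha>0$'', but the properness of $\gamma$ only separates stars that are adjacent in $G_*$, and $E_*$ is defined (Definition~\ref{def:stargraph}) solely via conditions~(i) and~(ii) of Lemma~\ref{lem:lemmaInferencemulti}. For small $\alpha$ these need not fire for two stars of the same trace: with $\alpha=1/4$ and the single trace $(v,*_1,*_2,w)$ one checks $\lceil\alpha\cdot 2\rceil=1\not>1$ and $\lceil\alpha\cdot 3\rceil=1\not>2$, so $G_*$ has no edge, the monochromatic coloring is proper, and merging $*_1$ with $*_2$ collapses their distance to $0$, which violates \RULETWO\ for every $\alpha'>0$. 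Your appeal to \RULEONE\ of $G_{\gamma}$ cannot repair this, since that property is itself established from the very injectivity in question (the argument is circular as written). The honest fix is to strengthen the conflict relation so that two stars occurring in the same trace are always adjacent in $G_*$ (loop-freedom for $\alpha>0$ justifies this); with that amendment both your proof and the paper's go through.
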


An inferrable topology always defines a proper coloring on $G_*$.
\begin{lemma}\label{lemma:bijection2}
  Let $\mathcal{T}$ be a trace set and $G_*$ its corresponding star graph. If a
  topology $G$ is inferrable from $\mathcal{T}$, then $G$ induces a proper coloring
  on  $G_*$.
\end{lemma}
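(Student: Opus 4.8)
The plan is to extract a coloring of $G_*$ directly from the mapping function $\map$ of the inferrable topology $G$, and then to argue that any two stars adjacent in $G_*$ are forced by the axioms to receive distinct colors. First I would define the candidate coloring. Since $G$ is inferrable, it comes with a surjective map $\map:\Sigma\to V$ that is the identity on named nodes and sends each star to exactly one node of $G$; I take the anonymous nodes of $G$ that receive at least one star to be the available colors and set $\gamma(*_i):=\map(*_i)$. By construction $\gamma$ assigns exactly one color to each vertex of $G_*$, and $\gamma(*_i)=\gamma(*_j)$ precisely when $\map(*_i)=\map(*_j)$, matching the contraction in Definition~\ref{def:colorinducedGraph}. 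It then remains only to verify that $\gamma$ is a \emph{proper} coloring, i.e., that $\{*_1,*_2\}\in E_*$ implies $\gamma(*_1)\neq\gamma(*_2)$.

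Next I would fix an arbitrary edge $\{*_1,*_2\}\in E_*$. By Definition~\ref{def:stargraph}, at least one of the conflict conditions (i) or (ii) of Lemma~\ref{lem:lemmaInferencemulti} holds for the pair $*_1,*_2$. That lemma asserts precisely that, under either condition, mapping $*_1$ and $*_2$ to the same node is impossible without violating the axioms. Since $G$ is inferrable it satisfies \RULEONE\ and \RULETWO, so we must have $\map(*_1)\neq\map(*_2)$, and hence $\gamma(*_1)\neq\gamma(*_2)$. Letting the edge range over all of $E_*$ shows that $\gamma$ is proper, which is the assertion of the lemma.

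The one step requiring care---and the main obstacle---is that Lemma~\ref{lem:lemmaInferencemulti} is stated in terms of canonic distances $d_C$ and its written proof argues inside $G_0$, whereas here it must be applied to an arbitrary inferrable $G$. The bridge is to observe that $G$ is a quotient of the canonic graph $G_C$: its edges are exactly the trace transitions with $\map$ applied, so $G$ arises from $G_C$ by contracting the stars that share a color. Contracting vertices can only shorten shortest paths, which yields $d_G(\map(\sigma_1),\map(\sigma_2))\leq d_C(\sigma_1,\sigma_2)$ for all symbols $\sigma_1,\sigma_2$. With this inequality the contradiction argument runs inside $G$ verbatim: assuming $\map(*_1)=\map(*_2)=w$, condition (i) together with \RULETWO\ on $T_1$ gives $d_G(u,w)\geq\lceil\alpha\, d_{T_1}(*_1,u)\rceil$, while the quotient bound gives $d_G(u,w)\leq d_C(u,*_2)$, so $\lceil\alpha\, d_{T_1}(*_1,u)\rceil\leq d_C(u,*_2)$ contradicts (i); condition (ii) is handled identically, invoking the triangle inequality exactly as in the proof of Lemma~\ref{lem:lemmaInferencemulti}. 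This confirms that the lemma, and hence the propriety of $\gamma$, holds for every inferrable topology.
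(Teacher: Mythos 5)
Your proof is correct and follows essentially the same route as the paper: read the coloring off the mapping function $\map$ and invoke Lemma~\ref{lem:lemmaInferencemulti} via the definition of $E_*$ to conclude properness. Your third paragraph supplies a justification (that the conflict argument transfers from $G_0$/$G_C$ to an arbitrary inferrable $G$ because $G$ is a quotient of $G_C$ and distances only shrink) that the paper's two-sentence proof leaves implicit, but this is a refinement of the same argument, not a different one.
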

\begin{proof}
  For any $\alpha$-consistent inferrable topology $G$ there exists some mapping function \map\
  that assigns each symbol of $\mathcal{T}$ to a corresponding node in $G$ (cf~Definition~\ref{def:topinference}), and this mapping function
  gives a coloring on $G_*$ (i.e., merged stars appear as nodes of the same
  color in $G_*$). The coloring must be proper: due to Lemma~\ref{lem:lemmaInferencemulti}, an inferrable
  topology can never merge adjacent nodes of $G_*$.
\end{proof}



The colorings of $G_*$ allow us to derive an upper bound on the
cardinality of $\mathcal{G}_{\mathcal{T}}$.
\begin{theorem}\label{thm:upper}
Given a trace set $\mathcal{T}$ sampled from a network $G_0$ and
$\mathcal{G}_{\mathcal{T}}$, the set of topologies inferrable from
$\mathcal{T}$, it holds that:
$$\displaystyle \sum_{k=\gamma(G_*)}^{\vert V_* \vert}
P(G_*,k)/k! \geq \vert \mathcal{G}_{\mathcal{T}} \vert ,
$$
\noindent where $\gamma(G_*)$ is the chromatic number of $G_*$ and
$P(G_*,k)$ is the number of colorings of $G_*$ with $k$ colors
(known as the \emph{chromatic polynomial} of $G_*$).
\end{theorem}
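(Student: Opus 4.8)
The plan is to bound $|\mathcal{G}_{\mathcal{T}}|$ by the number of \emph{unordered} proper colorings of $G_*$ (equivalently, partitions of $V_*$ into independent sets), and then to verify that the right-hand side overcounts exactly this quantity. The argument splits into an injectivity step and two routine combinatorial identities.

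First I would construct an injection from $\mathcal{G}_{\mathcal{T}}$ into the set of partitions of $V_*$ into independent sets of $G_*$. By Lemma~\ref{lemma:bijection2}, every inferrable topology $G$ induces a proper coloring of $G_*$, i.e., a partition of $V_*$ whose blocks are the sets of stars that $\map$ merges into a common anonymous node. By Definition~\ref{def:colorinducedGraph} together with the path-aware $\map$-labeling of Definition~\ref{def:topinference}, two inferrable topologies that induce the \emph{same} partition carry identical labels and hence coincide as elements of $\mathcal{G}_{\mathcal{T}}$; thus $G \mapsto (\text{its induced partition})$ is injective. Consequently $|\mathcal{G}_{\mathcal{T}}| \le \mathcal{P}$, where $\mathcal{P}$ denotes the number of partitions of $V_*$ into independent sets of $G_*$.

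Next I would expand the chromatic polynomial in the falling-factorial basis. Writing $a_j$ for the number of partitions of $V_*$ into exactly $j$ nonempty independent sets (so $a_j=0$ for $j<\gamma(G_*)$), the standard count of proper colorings with a palette of $k$ distinguishable colors gives
$$P(G_*,k) = \sum_{j=\gamma(G_*)}^{|V_*|} a_j \, \frac{k!}{(k-j)!}, \qquad \mathcal{P} = \sum_{j=\gamma(G_*)}^{|V_*|} a_j.$$
Dividing by $k!$, summing over $k$, and interchanging the two finite sums (using $k!/(k-j)! = 0$ for $k<j$) yields
$$\sum_{k=\gamma(G_*)}^{|V_*|} \frac{P(G_*,k)}{k!} = \sum_{j=\gamma(G_*)}^{|V_*|} a_j \sum_{m=0}^{|V_*|-j} \frac{1}{m!} \;\ge\; \sum_{j=\gamma(G_*)}^{|V_*|} a_j = \mathcal{P},$$
where the inequality holds because each inner sum contains the $m=0$ term equal to $1$. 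Combining this with the injectivity step gives $\sum_k P(G_*,k)/k! \ge \mathcal{P} \ge |\mathcal{G}_{\mathcal{T}}|$, which is the claim.

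The two identities are mechanical once the falling-factorial expansion is in place; the conceptual crux, and the step I expect to have to argue most carefully, is the injectivity in the first paragraph. Everything hinges on the path-aware notion of isomorphism: I must ensure that distinct inferrable topologies really do correspond to genuinely distinct partitions of the stars (rather than to colorings that differ only by renaming colors), so that counting partitions legitimately upper-bounds $|\mathcal{G}_{\mathcal{T}}|$.
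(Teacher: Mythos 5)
Your proof is correct and follows essentially the same route as the paper: Lemma~\ref{lemma:bijection2} supplies the map from inferrable topologies to proper colorings (equivalently, partitions of $V_*$ into independent sets), injectivity holds because the partition determines which stars are merged and hence the $\map$-labeled topology, and the right-hand side overcounts these partitions. You are in fact more careful than the paper, which merely asserts that dividing by $k!$ ``accounts for isomorphic colorings''; your falling-factorial expansion showing $\sum_{k}P(G_*,k)/k!=\sum_j a_j\sum_{m=0}^{|V_*|-j}1/m!\geq\sum_j a_j$ makes that counting step rigorous.
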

\begin{proof}
The proof follows directly from Lemma~\ref{lemma:bijection2} which
shows that each inferred topology has proper colorings, and the fact
that a coloring of $G_*$ cannot result in two different inferred
topologies, as the coloring uniquely describes which stars to merge
(Lemma~\ref{thm:bijection}). In order to account for isomorphic
colorings, we need to divide by the number of color permutations.
%
%
\end{proof}

Note that the fact that $G_*$ can be an arbitrary graph
(Lemma~\ref{lemma:allstars}) implies that we cannot exploit some
special properties of $G_*$ to compute colorings of $G_*$ and
$\gamma(G_*)$. Also note that the exact computation of the upper
bound is hard, since the minimal coloring as well as the chromatic
polynomial of $G_*$ (in P$\sharp$) is needed. To complement the
upper bound, we note that star graphs with a small number of
conflict edges can indeed result in a large number of inferred
topologies.
\begin{theorem}\label{thm:lower}
For any $\alpha>0$, there is a trace set for which the number of
non-isomorphic colorings of $G_*$ equals
$|\mathcal{G}_{\mathcal{T}}|$, in particular
$|\mathcal{G}_{\mathcal{T}}|= B_s$, where
$\mathcal{G}_{\mathcal{T}}$ is the set of inferrable and
$\alpha$-consistent topologies, $s$ is the number of stars in
$\mathcal{T}$, and $B_s$ is the \emph{Bell number}
of $s$. Such a
trace set can originate from a $G_0$ network with one anonymous node
only.
\end{theorem}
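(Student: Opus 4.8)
The plan is to exhibit a single explicit family of trace sets, parametrized by the number of stars $s$, whose star graph is edgeless, and then to show that in this case the correspondence between colorings of $G_*$ and inferrable topologies promised by Lemmas~\ref{thm:bijection} and~\ref{lemma:bijection2} becomes an exact bijection. I would take $G_0$ to consist of just two named nodes $a$ and $b$ together with a single anonymous node $w$, with edges $\{a,w\}$ and $\{w,b\}$, and I would let $\mathcal{T}$ be $s$ copies of the length-two trace traversing the path $a \to w \to b$, the $i$-th copy recording the passage through $w$ as the star $*_i$. This trace set has exactly $s$ stars, it is generated by a $G_0$ with a single anonymous node (as required), it satisfies \RULEZERO\ (both edges appear), \RULEONE\ (each trace is the genuine acyclic path $a\,w\,b$), and it is $\alpha$-consistent for the given $\alpha$ since the only nontrivial shortest-path constraint is $d_{G_0}(a,b)=2\ge\lceil 2\alpha\rceil$ for $\alpha\in(0,1]$.

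Next I would compute $G_*$ and argue it is the empty graph on $s$ vertices. In $G_C$ every star is adjacent to both $a$ and $b$, so $d_C(a,*_i)=d_C(b,*_i)=1$ and $d_C(a,b)=2$. Checking the two conflict conditions of Lemma~\ref{lem:lemmaInferencemulti}: condition~(i) would require $\lceil\alpha\cdot d_{T}(*_i,u)\rceil>d_C(u,*_j)$ for a named $u\in\{a,b\}$, but the left side is $\lceil\alpha\rceil=1$ while $d_C(u,*_j)\ge 1$, so it never fires; condition~(ii) would require $\lceil\alpha\cdot d_T(a,b)\rceil=\lceil 2\alpha\rceil>d_C(a,*_i)+d_C(b,*_j)=2$, which is impossible for $\alpha\le 1$. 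Hence no pair of stars is in conflict and $G_*$ has no edges. Consequently every set partition of the $s$ stars is a proper coloring, and the number of non-isomorphic colorings---partitions of $s$ elements counted up to color permutation---is exactly the Bell number $B_s$. Combined with Lemma~\ref{lemma:bijection2}, which maps each inferrable topology to a proper coloring of $G_*$ injectively (distinct colorings yield non-isomorphic topologies by Definition~\ref{def:colorinducedGraph}), this already gives the upper bound $|\mathcal{G}_{\mathcal{T}}|\le B_s$.

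For the matching lower bound I would show that every one of the $B_s$ coloring-induced topologies $G_\gamma$ is genuinely inferrable for the prescribed $\alpha$, so that the injection above is in fact a bijection. \RULEONE\ (and hence \RULETHREE) holds for $G_\gamma$ by Lemma~\ref{thm:bijection}; the point that needs care is \RULETWO\ with the \emph{specific} parameter $\alpha$, since Lemma~\ref{thm:bijection} only guarantees $\alpha$-consistency for some $\alpha'>0$. Here the symmetry of the construction is what saves us: in $G_\gamma$ the node $a$ and the node $b$ are joined only through anonymous nodes and never directly, so regardless of how the stars are merged we always have $d_{G_\gamma}(a,b)=2\ge\lceil 2\alpha\rceil$, and every star--named distance remains $1\ge\lceil\alpha\rceil$. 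Thus no merge creates a shortcut that would violate routing consistency, and each $G_\gamma$ indeed lies in $\mathcal{G}_{\mathcal{T}}$.

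Putting the two bounds together yields $|\mathcal{G}_{\mathcal{T}}|=B_s$, equal to the number of non-isomorphic colorings of $G_*$, realized by a $G_0$ with a single anonymous node. I expect the only real obstacle to be the \RULETWO\ verification in the lower bound: a careless construction (e.g.\ allowing the named endpoints of different traces to differ, or permitting a direct $a$--$b$ transition) could let some merge shorten a named-to-named distance below $\lceil\alpha\, d_T\rceil$, destroying $\alpha$-consistency for part of the $B_s$ partitions and collapsing the count below $B_s$. Using identical parallel two-hop traces between a fixed pair $a,b$ is precisely the device that keeps all named distances pinned at $2$ and makes every partition admissible simultaneously.
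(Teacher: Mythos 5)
Your proposal is correct and follows essentially the same route as the paper: exhibit a trace set of $s$ length-two traces through a single anonymous hub so that $G_*$ is edgeless, observe that the non-isomorphic colorings are exactly the $B_s$ set partitions of the stars, and check that every induced merge remains $\alpha$-consistent and yields a distinct labeled topology. The only cosmetic difference is that the paper uses $2s$ distinct named endpoints $(\sigma_i,*_i,\sigma_i')$ while you reuse a single pair $a,b$, which just adds the (correctly handled) check that no merge shortcuts $d(a,b)$ below $2$.
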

\begin{proof}
Consider a trace set
$\mathcal{T}=\{(\sigma_i,*_i,\sigma'_i)_{i=1,\ldots,s}\}$ (e.g.,
obtained from exploring a topology $G_0$ where one anonymous center
node is connected to $2s$ named nodes). The trace set does not
impose any constraints on how the stars relate to each other, and
hence, $G_*$ does not contain any edges at all; even when stars are
merged, there are no constraints on how the stars relate to each
other. Therefore, the star graph for $\mathcal{T}$ has
$B_s=\sum_{j=0}^s S_{(s,j)}$ colorings, where $S_{(s,j)}=1/j!\cdot
\sum_{\ell=0}^j (-1)^{\ell}\binom{j}{\ell}(j-\ell)^s$ is the number
of ways to group $s$ nodes into $j$ different, disjoint non-empty
subsets (known as the \emph{Stirling number of the second kind}).
Each of these colorings also describes a distinct inferrable
topology as $\map$ assigns unique labels to anonymous nodes stemming
from merging a group of stars (cf
Definition~\ref{def:topinference}).
\end{proof}

\subsection{Properties}

Even if the number of inferrable topologies is large, topology
inference can still be useful if one is mainly interested in the
properties of $G_0$ and if the ensemble $\mathcal{G}_{\mathcal{T}}$
is homogenous with respect to these properties; for example, if
``most'' of the instances in $\mathcal{G}_{\mathcal{T}}$ are close
to $G_0$, there may be an option to conduct an efficient sampling
analysis on random representatives. Therefore, in the following, we
will take a closer look how much the members of
$\mathcal{G}_{\mathcal{T}}$ differ.

Important metrics to characterize inferrable topologies are, for
instance, the graph size, the diameter $\diam(\cdot)$, the number of
triangles $C_{3}(\cdot)$ of $G$, and so on. In the following, let
$G_1=(V_1,E_1),G_2=(V_2,E_2) \in \mathcal{G}_{\mathcal{T}}$ be two
arbitrary representatives of $\mathcal{G}_{\mathcal{T}}$.

As one might expect, the graph size can be estimated quite well.
\begin{lemma}\label{lemma:nofnodesedges}
It holds that $|V_1|-|V_2|\leq s - \gamma(G_*)\leq  s -1$ and
$|V_1|/|V_2|\leq (n+s)/(n+\gamma(G_*))\leq (2+s)/3$. Moreover,
$|E_1|-|E_2|\leq 2(s-\gamma(G_*))$ and $|E_1|/|E_2|\leq
(\nu+2s)/(\nu+2)\leq s$, where $\nu$ denotes the number of edges
between non-anonymous nodes. There are traces with inferrable
topology $G_1, G_2$ reaching these bounds.
\end{lemma}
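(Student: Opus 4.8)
The plan is to exploit the correspondence between inferrable topologies and proper colorings of $G_*$ (Lemma~\ref{lemma:bijection2}). Every inferrable topology $G$ contains exactly $n$ named nodes, because $\map$ is bijective on $\mathcal{ID}$, plus one anonymous node per color class of the proper coloring it induces on $G_*$. Hence $|V(G)|=n+k_G$, where $\gamma(G_*)\le k_G\le s$: the lower bound holds because a proper coloring needs at least $\gamma(G_*)$ colors, and the upper bound is attained by $G_C$, which keeps all $s$ stars distinct. The vertex bounds then fall out directly: $|V_1|-|V_2|=k_{G_1}-k_{G_2}\le s-\gamma(G_*)$, and since at least one color is used, $\le s-1$; for the ratio, $|V_1|/|V_2|\le (n+s)/(n+\gamma(G_*))$, and I would finish by noting this expression is decreasing in both $n$ and $\gamma(G_*)$, so substituting the smallest admissible values $n=2$ (each trace begins and ends at a named node) and $\gamma(G_*)=1$ gives the stated $(2+s)/3$.

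For the edges I would first set aside the $\nu$ edges between named nodes: they do not involve stars and so are untouched by any merging, hence identical in every inferrable topology, and only the edges incident to anonymous nodes — call their number $E^*(G)$ — can vary. Two structural facts drive the rest. First, each star, having a unique identifier, occupies an \emph{interior} position of exactly one trace and therefore has degree exactly $2$ in $G_C$; an anonymous node formed by merging $m$ stars thus has degree at most $2m$, and summing over anonymous nodes yields $E^*(G)\le 2s$, i.e.\ $|E(G)|\le \nu+2s$. Second, and crucially, \RULEONE\ forces every trace to map to a \emph{cycle-free} path in $G$, so a star's predecessor and successor map to two \emph{distinct} nodes; consequently every anonymous node has degree at least $2$.

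Combining these, let $e_{ss}(G)$ count edges joining two anonymous nodes. Handshaking gives $\sum_{\text{anon }v}\deg_G(v)=E^*(G)+e_{ss}(G)$, so degree $\ge 2$ yields $E^*(G)\ge 2k_G-e_{ss}(G)\ge 2\gamma(G_*)-e_{ss}(G)$. Since any inferrable topology is a vertex-contraction of $G_C$, contraction never increases the edge count nor the number of star-star edges, so $G_C$ is the unique edge-maximizer and $e_{ss}(G)\le e_{ss}(G_C)$. Taking the worst case $G_1=G_C$ and combining $E^*(G_C)=2s-e_{ss}(G_C)$ with $E^*(G_2)\ge 2\gamma(G_*)-e_{ss}(G_C)$, the $e_{ss}(G_C)$ terms cancel and leave $|E_1|-|E_2|\le 2(s-\gamma(G_*))$. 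For the ratio I would use only the crude bound $E^*(G)\ge 2$ (a single anonymous node of degree $\ge 2$ already contributes two distinct edges), giving $|E_1|/|E_2|\le(\nu+2s)/(\nu+2)$, an expression decreasing in $\nu$ and hence at most $s$ at $\nu=0$.

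To establish that all four bounds are tight \emph{simultaneously}, I would exhibit the single trace set of $s$ parallel traces $(\sigma,*_i,\tau)$ through one common pair of named endpoints: here $G_*$ is edgeless, so $\gamma(G_*)=1$ and every star-partition is realizable; $G_C$ has $2+s$ vertices and $2s$ edges, while merging all stars into one node gives $3$ vertices and $2$ edges, matching $s-\gamma(G_*)$, $(2+s)/3$, $2(s-\gamma(G_*))$ and $s$. The main obstacle I anticipate is the edge-\emph{difference} bound: the naive ``each merge destroys at most two edges'' fails once color classes grow large, and a per-node degree count alone only yields the weaker $2s-\gamma(G_*)$. The careful part is routing the argument through $G_C$ as the edge-maximizer and verifying that star-star edges can only vanish under contraction, so that the $e_{ss}$ slack cancels exactly; the degree-$\ge 2$ guarantee for anonymous nodes, which rests entirely on the cycle-freeness demanded by \RULEONE, is the linchpin that keeps the minimum edge count from collapsing.
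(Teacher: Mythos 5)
Your proof is correct and follows the paper's overall strategy: the vertex bounds come from sandwiching the number of anonymous nodes between $\gamma(G_*)$ (via Lemma~\ref{lemma:bijection2}) and $s$ (attained by $G_C$), the edge bounds come from each star contributing at most two edges, and tightness is shown with the same family of $s$ parallel traces $(v,*_i,w)$ through a common pair of endpoints. Where you genuinely go beyond the paper is the edge-difference bound: the paper's own proof disposes of it in two sentences (``each occurrence of a node in a trace describes at most two edges''), which, as you correctly observe, only yields $2s-\gamma(G_*)$ once one accounts for edges joining two anonymous nodes being shared between color classes. Your repair --- establishing that every anonymous node in an inferrable topology has degree at least $2$ (via the cycle-freeness required by \RULEONE), writing $E^*(G_C)=2s-e_{ss}(G_C)$ and $E^*(G_2)\geq 2\gamma(G_*)-e_{ss}(G_2)$, and cancelling the star--star edge counts using $e_{ss}(G_2)\leq e_{ss}(G_C)$ because every inferrable topology is a contraction of $G_C$ --- is a legitimate and welcome tightening of an argument the paper leaves implicit; the paper buys brevity, while your version buys an actual proof of the $2(s-\gamma(G_*))$ constant in the presence of adjacent stars.
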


Observe that inferrable topologies can also differ in the number of
connected components. This implies that the shortest distance
between two named nodes can differ arbitrarily between two
representatives in $\mathcal{G}_{\mathcal{T}}$.
\begin{lemma}\label{lemma:components}
  Let $\Comp(G)$ denote the number of connected components of a topology $G$.
  Then, $|\Comp(G_1)-\Comp(G_2)|\leq n/2$. There are instances $G_1,G_2$
that reach this bound.
\end{lemma}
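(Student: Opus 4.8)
The plan is to pin down the two extreme component counts over the whole family $\mathcal{G}_{\mathcal{T}}$ by comparing every inferrable topology against the canonic graph $G_C$, and then to exhibit a single trace set that realizes both extremes. The guiding observation is that all inferrable topologies differ only in how stars are merged, so $G_C$ (no merging) is the ``finest'' member and should maximize the number of components, while aggressive merging drives the count down to $1$.

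For the upper bound I would first argue that $G_C$ maximizes $\Comp(\cdot)$ over $\mathcal{G}_{\mathcal{T}}$. By Lemma~\ref{lemma:bijection2} together with Definition~\ref{def:colorinducedGraph}, every inferrable $G$ arises from $G_C$ by contracting the stars that share a color; contracting two vertices either merges two components into one (when they lie in distinct components) or leaves the count unchanged, so it never \emph{increases} the number of components. Hence $\Comp(G)\le \Comp(G_C)$ for all $G\in\mathcal{G}_{\mathcal{T}}$. Next I would bound $\Comp(G_C)$ itself: since consecutive symbols of a trace are adjacent in $G_C$, each trace is a connected path in $G_C$, and therefore its two (distinct, named) endpoints lie in the same component. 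Thus every component of $G_C$ contains at least two of the $n$ named nodes, and as each named node sits in exactly one component, a counting argument yields $\Comp(G_C)\le n/2$. Combining this with $\Comp(\cdot)\ge 1$ gives, for any $G_1,G_2$, the estimate $|\Comp(G_1)-\Comp(G_2)|\le \Comp(G_C)-1\le n/2-1\le n/2$.

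For tightness I would take $\mathcal{T}=\{(u_i,*_i,v_i)\}_{i=1,\ldots,n/2}$ over pairwise distinct named nodes $u_1,v_1,\dots,u_{n/2},v_{n/2}$ with $s=n/2$ stars. On one side, $G_1:=G_C$ is the disjoint union of the paths $u_i-*_i-v_i$, so $\Comp(G_1)=n/2$. On the other side, let $G_2$ be obtained by merging all stars into one anonymous node; then that node is adjacent to every $u_i$ and $v_i$, the graph is connected, and $\Comp(G_2)=1$. The realized difference $\Comp(G_1)-\Comp(G_2)=n/2-1$ matches the bound up to the additive constant.

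The step I expect to be the real obstacle is verifying that the heavily merged $G_2$ is genuinely inferrable. Because the traces are vertex-disjoint, any two stars $*_i,*_j$ lie in different components of $G_C$, so $d_{C}(\cdot,*_j)=\infty$ and neither condition~(i) nor~(ii) of Lemma~\ref{lem:lemmaInferencemulti} can fire; hence $G_*$ is edgeless and merging all stars is a proper coloring, which by Lemma~\ref{thm:bijection} induces an inferrable topology (indeed the merged paths $u_i-*-v_i$ still have length $2=d_{T_i}(u_i,v_i)$, so \RULEONE\ and \RULETWO\ hold even with $\alpha=1$, making $G_2$ consistent for every admissible $\alpha$). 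A secondary point to state explicitly is the ``two distinct named endpoints per trace'' fact underlying the counting bound, which rests on the model convention that the first and last symbol of each trace are named and distinct.
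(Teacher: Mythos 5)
Your proposal is correct, and the tightness construction (disjoint traces $(u_i,*_i,v_i)$, canonic graph versus all-stars-merged graph, realizing a gap of $\lfloor n/2\rfloor-1$) is essentially identical to the paper's. The upper-bound argument, however, is organized differently. The paper compares two arbitrary $G_1,G_2\in\mathcal{G}_{\mathcal{T}}$ directly, via a proof by contradiction that counts the components of the finer topology that get merged in the coarser one and charges two named nodes to each. You instead establish an extremality principle: every inferrable topology is a contraction of $G_C$ (via Lemma~\ref{lemma:bijection2} and the $\map$-induced identification of stars), contraction never increases the component count, so $\Comp(G)\le\Comp(G_C)$ for all $G\in\mathcal{G}_{\mathcal{T}}$; then the same ``two named endpoints per component'' counting bounds $\Comp(G_C)\le n/2$, and $\Comp(G)\ge 1$ closes the argument. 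Both routes rest on the same counting fact, but yours buys two things: it is cleaner (the paper's contradiction argument is somewhat tangled about which of $G_1,G_2$ plays which role), and it actually yields the sharper bound $|\Comp(G_1)-\Comp(G_2)|\le \Comp(G_C)-1\le n/2-1$, which is exactly what the extremal example achieves --- so you in effect close the off-by-one slack left in the stated lemma. Your side remarks are also the right ones to make explicit: the inferrability of the fully merged graph for every $\alpha\in(0,1]$ (checked directly on \RULEONE\ and \RULETWO, since Lemma~\ref{thm:bijection} alone only guarantees some $\alpha'>0$), and the convention that each trace has two distinct named endpoints, which both your counting step and the paper's implicitly require.
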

\begin{proof}
  Consider the trace set $\mathcal{T}=\{T_i, i=1 \ldots \lfloor n/2
  \rfloor\}$ in which $T_i=\{n_{2i},*_{i},n_{2i+1}\}$. Since $i\neq j
  \Rightarrow T_i \cap T_j = \emptyset$, we have $\vert E_* \vert =0$. Take
  $G_1$ as the $1$-coloring of $G_*$: $G_1$ is a topology with one anonymous
  node connected to all named nodes. Take $G_2$ as the $\lfloor n/2 \rfloor$-coloring
  of the star graph: $G_2$ has $\lfloor n/2 \rfloor$ distinct
  connected components (consisting of three nodes).

  \emph{Upper bound:} For the sake of contradiction, suppose $\exists \mathcal{T}$ s.t.~$\vert
  \Comp(G_1) - \Comp(G_2) \vert > \lfloor n/2 \rfloor$. Let us
  assume that
  $G_1$ has the most connected components: $G_1$ has at least $ \lfloor
  n/2 \rfloor +1$ more connected components than $G_2$. Let $C$ refer to a connected component of $G_2$
  whose nodes are not connected in $G_1$. This means that $C$ contains at least one anonymous node. Thus, $C$ contains
  at least two named nodes (since a trace $T$ cannot start or end by a star).
  There must exist at least $\lfloor n/2 \rfloor +1$ such connected component
  $C$. Thus $G_2$ has to contain at least $2(\lfloor n/2 \rfloor +1)
  \geq n+1 $ named nodes. Contradiction.
\end{proof}

An important criterion for topology inference regards the distortion
of shortest paths.
\begin{definition}[Stretch]\label{def:diststretch}
The maximal ratio of the distance of two non-anonymous nodes in
$G_0$ and a connected topology $G$ is called the \emph{stretch}
$\rho$: $ \rho=\max_{u,v\in \mathcal{ID}(G_0)}
\max\{\dist_{G_0}(u,v)/\dist_{G}(u,v),\dist_{G}(u,v)/\dist_{G_0}(u,v)\}.
$
\end{definition}

From Lemma~\ref{lemma:components} we already know that inferrable
topologies can differ in the number of connected components, and
hence, the distance and the stretch between nodes can be arbitrarily
wrong. Hence, in the following, we will focus on connected graphs
only. However, even if two nodes are connected, their distance can
be much longer or shorter than in $G_0$.
Figure~\ref{fig:consistency} gives an example. Both topologies are
inferrable from the traces $T_1=(v,*,v_1,\ldots,v_{k},u)$ and
$T_2=(w,*,w_1,\ldots,w_{k},u)$. One inferrable topology is the
canonic graph $G_C$ (Figure~\ref{fig:consistency} \emph{left}),
whereas the other topology merges the two anonymous nodes
(Figure~\ref{fig:consistency} \emph{right}). The distances between
$v$ and $w$ are $2(k+2)$ and $2$, respectively, implying a stretch
of $k+2$.
\begin{wrapfigure} {r}{0.42\textwidth}
\includegraphics[width=0.38\textwidth]{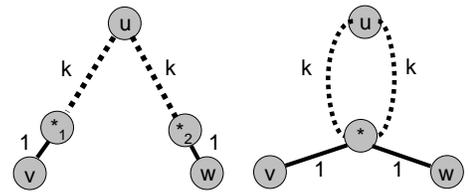}
        \vspace*{-.3cm}
\caption{\bf Due to the lack of a trace between $v$ and $w$, the
stretch of an inferred topology can be
large.}\label{fig:consistency}
        \vspace*{-.2cm}
\end{wrapfigure}

\begin{lemma}\label{lemma:distortion}
Let $u$ and $v$ be two arbitrary named nodes in the connected
topologies $G_1$ and $G_2$. Then, even for only two stars in the
trace set, it holds for the stretch that $ \rho \leq (N-1)/2$. There
are instances $G_1,G_2$ that reach this bound.
\end{lemma}

We now turn our attention to the diameter and the degree.
\begin{lemma}\label{lemma:diameter}
For connected topologies $G_1,G_2$ it holds that
$\diam(G_1)-\diam(G_2)\leq (s-1)/s\cdot \diam(G_C)\leq (s-1)(N-1)/s$
and $\diam(G_1)/\diam(G_2)\leq s$, where $\diam$ denotes the graph
diameter and $\diam(G_1)>\diam(G_2)$. There are instances $G_1,G_2$
that reach these bounds.
\end{lemma}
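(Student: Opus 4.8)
The plan is to reduce both inequalities to two monotonicity facts about how contracting stars affects the diameter, and then to exhibit a single trace set that is simultaneously tight for all four bounds. Throughout I use that every inferrable $G$ arises from $G_C$ by a proper coloring of $G_*$ (Lemma~\ref{lemma:bijection2}), i.e.~by the surjective quotient map $\phi=\map$ that contracts each color class of stars to one anonymous node while fixing the named nodes. The two facts are: (I) $\diam(G_1)\le\diam(G_C)\le N-1$ for any inferrable $G_1$, and (II) $\diam(G_2)\ge\diam(G_C)/s$ for any \emph{connected} inferrable $G_2$. Granting these, the lemma is pure arithmetic: since $\diam(G_1)\le\diam(G_C)$ and $\diam(G_2)\ge\diam(G_C)/s$, we get $\diam(G_1)-\diam(G_2)\le\frac{s-1}{s}\diam(G_C)\le\frac{(s-1)(N-1)}{s}$ and $\diam(G_1)/\diam(G_2)\le s$.

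Fact (I) is the easy direction. Because $\phi$ maps each edge of $G_C$ either to an edge of $G_1$ or to a single contracted vertex, it can only shorten distances: $d_{G_1}(\phi(a),\phi(b))\le d_{G_C}(a,b)$ for all $a,b\in V(G_C)$, and surjectivity of $\phi$ then yields $\diam(G_1)\le\diam(G_C)$. The estimate $\diam(G_C)\le N-1$ is immediate, as $G_C$ has exactly $N$ vertices.

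Fact (II) is the heart of the argument, and I would approach it by viewing the passage from $G_C$ to $G_2$ as a sequence of elementary vertex identifications: merging a color class of $t$ stars is $t-1$ identifications of two anonymous vertices, so at most $s-1$ identifications occur in total. The elementary estimate is that identifying two vertices $p,q$ of a graph $H$ lowers the diameter by at most $d_H(p,q)$ (no distance can drop by more than $d_H(p,q)$, by the triangle inequality). Chaining these gives $\diam(G_2)\ge\diam(G_C)-\sum_i d_{H_i}(p_i,q_i)$, where $H_i$ is the graph before the $i$-th identification, so it suffices to order the identifications so that the total contracted distance is at most $\frac{s-1}{s}\diam(G_C)$. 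The main obstacle is precisely this quantitative bound: a priori two stars of the same color class may lie far apart in $G_C$, so a careless order could contract a long distance at each of the (up to $s-1$) steps. Here the conflict structure of $G_*$ must be used. By Lemma~\ref{lem:lemmaInferencemulti}(ii), stars that are close (in $d_C$) to two named nodes joined by a long subpath of some trace cannot share a color; contrapositively, mergeable stars are never ``long-trace far'', which is what should let one merge, within each class, a currently-closest pair at every step and charge the contracted length against a geodesic of $G_C$ (giving contracted distance exactly $2$ per step in the extremal instance). Turning this qualitative statement into the clean bound $\sum_i d_{H_i}(p_i,q_i)\le\frac{s-1}{s}\diam(G_C)$ is the step I expect to require the most care.

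For tightness I would use one trace set meeting all four bounds at once. Take $T_i=(v_{i-1},*_i,v_i)$ for $i=1,\dots,s$, so $n=s+1$, $N=2s+1$, and $G_C$ is the path $v_0,*_1,v_1,\dots,*_s,v_s$ of diameter $2s=N-1$. Every trace has length two, so for $\alpha\le 1$ neither condition of Lemma~\ref{lem:lemmaInferencemulti} ever fires and $G_*$ is edgeless. Choosing $G_1=G_C$ and letting $G_2$ be the one-color contraction (all stars merged into a single hub adjacent to every $v_i$, i.e.~a star $K_{1,s+1}$ of diameter $2$) gives $\diam(G_1)-\diam(G_2)=2s-2=\frac{s-1}{s}\diam(G_C)=\frac{(s-1)(N-1)}{s}$ and $\diam(G_1)/\diam(G_2)=s$; both topologies are connected, inferrable, and (since all trace distances are preserved as lower bounds) $\alpha$-consistent for every $\alpha\in(0,1]$, which certifies all four bounds as tight.
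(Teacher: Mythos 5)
Your reduction to the two facts is sound, and Fact~(I) ($\diam(G_1)\le\diam(G_C)\le N-1$, since contraction is distance non-increasing and surjective) is correct; your tightness instance is essentially the paper's own (the paper's construction with $x$ intermediate named nodes per trace, specialized to $x=0$), and it does certify all four bounds. The problem is Fact~(II), which you correctly identify as the heart of the matter but do not prove --- and, more seriously, the route you propose for it cannot be completed. You want to chain the elementary estimate ``identifying $p,q$ lowers the diameter by at most $d_H(p,q)$'' and then show that the total contracted distance satisfies $\sum_i d_{H_i}(p_i,q_i)\le\frac{s-1}{s}\diam(G_C)$. That intermediate inequality is false, and the paper's own Figure~\ref{fig:consistency} is a counterexample: for $T_1=(v,*_1,v_1,\ldots,v_k,u)$ and $T_2=(w,*_2,w_1,\ldots,w_k,u)$ we have $s=2$, $\diam(G_C)=d_{G_C}(v,w)=2k+4$, and the single admissible identification merges two stars at distance $d_{G_C}(*_1,*_2)=2k+2$, which exceeds $\frac{s-1}{s}\diam(G_C)=k+2$ for every $k>0$. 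The same instance refutes your supporting heuristic that mergeable stars are ``never long-trace far'': these two stars are mergeable yet lie at nearly diameter distance in $G_C$. The conclusion $\diam(G_2)\ge\diam(G_C)/s$ nevertheless holds there ($\diam(G_2)=k+2$), but only because the long chains of \emph{named} nodes keep the merged graph wide --- information that is invisible to any bound that only tracks cumulative diameter loss per identification. So the gap is not merely that a careful ordering argument is missing; the quantity you propose to control is simply not bounded by $\frac{s-1}{s}\diam(G_C)$.

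The paper closes Fact~(II) by a different (if terse) argument: it reasons directly about a diameter-realizing shortest path $\pi$ in $G_C$. Such a path visits each star at most once, hence contains at most $s$ anonymous vertices, and merging stars can only introduce shortcuts at those at most $s$ positions along $\pi$; in the worst case $\pi$ is chopped into $s$ pieces, whence the merged topology retains a distance of at least $\diam(G_C)/s$. If you want to salvage your write-up, you should replace the cumulative-loss accounting by an argument of this type (lift a shortest $x'$--$y'$ path in $G_2$ back to $G_C$ and charge the at most $s$ ``jumps'' against segments of $\pi$, or argue about the longest all-named segment of $\pi$ surviving the contraction), rather than trying to bound $\sum_i d_{H_i}(p_i,q_i)$.
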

\begin{wrapfigure}{r}{0.5\textwidth}
        \vspace*{-.3cm}
\includegraphics[width=0.44\textwidth]{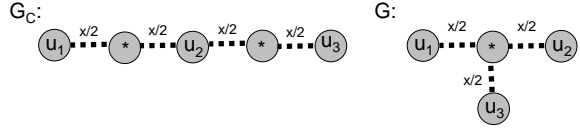}
        \vspace*{-.3cm}
 \caption{\bf Estimation error for diameter.}\label{fig:diam-a}
\end{wrapfigure}
\begin{proof}
\emph{Upper bound:} As $G_C$ does not merge any stars, it describes
the network with the largest diameter. Let $\pi$ be a longest path
between two nodes $u$ and $v$ in $G_C$. In the extreme case, $\pi$
is the only path determining the network diameter and $\pi$ contains
all star nodes. Then, the graph where all $s$ stars are merged into
one anonymous node has a minimal diameter of at least
$\diam(G_C)/s$.

\emph{Example meeting the bound:} Consider the trace set
$\mathcal{T}$ $=\{(u_1,\ldots,*_1,\ldots,u_2),$
$(u_2,\ldots,*_2,\ldots,u_3),$ $\ldots,$ $(u_{s},\ldots,*_s,$
$\ldots,u_{s+1})\}$ with $x$ named nodes and star in the middle
between $u_i$ and $u_{i+1}$ (assume $x$ to be even, $x$ does not
include $u_i$ and $u_{i+1}$ ). It holds that
$\diam(G_C)=s\cdot(x+2)$ whereas in a graph $G$ where all stars are
merged, $\diam(G)=x+2$. There are $n=s(x+1)$ non-anonymous nodes, so
$x=(n-s-1)/s$. Figure~\ref{fig:diam-a} depicts an example.
\end{proof}

\begin{lemma}\label{lemma:degree}
For the maximal node degree $\degree$, we have
$\degree(G_1)-\degree(G_2)\leq 2(s-\gamma(G_*))$ and
$\degree(G_1)/\degree(G_2)\leq s-\gamma(G_*)+1$. There are instances
$G_1,G_2$ that reach these bounds.
\end{lemma}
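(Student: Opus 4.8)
The plan is to translate both statements into the coloring picture established by Lemma~\ref{lemma:bijection2}: every $G_i\in\mathcal{G}_{\mathcal{T}}$ is obtained from $G_C$ by contracting each color class of a proper coloring of $G_*$ into a single anonymous node (Definition~\ref{def:colorinducedGraph}). First I would record two elementary degree facts in $G_C$. Since the first and last symbol of every trace is named and \RULEONE\ forbids cycles, each star occurs exactly once and has exactly two distinct neighbours in $G_C$, namely its predecessor and successor. Consequently, contracting a color class $C$ of stars yields an anonymous node of degree at most $2|C|$ (and at least $2$), whereas contraction can only \emph{decrease} the degree of a named node and never alters a named--named edge. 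Hence $\degree(G_i)$ is attained either at a named node, where it is bounded by its $G_C$-degree, or at an anonymous node, where it is bounded by twice its class size.

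The combinatorial core is a bound on class sizes. In a proper coloring at least $\gamma(G_*)$ colors are used, so the $s$ stars split into at least $\gamma(G_*)$ non-empty classes; the largest therefore contains at most $s-(\gamma(G_*)-1)=s-\gamma(G_*)+1$ stars. Equivalently, the independence number of $G_*$ is at most $s-\gamma(G_*)+1$ (color a maximum independent set with one color and every remaining star individually). This single inequality drives both upper bounds.

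For the upper bounds I would let $G_1$ be the topology of larger maximal degree and inspect a node $x$ realizing $\degree(G_1)$. If $x$ is anonymous, its class is independent, so $\degree(G_1)\le 2(s-\gamma(G_*)+1)$; since every inferrable topology still contains an anonymous node of degree at least $2$, we have $\degree(G_2)\ge 2$, which yields $\degree(G_1)-\degree(G_2)\le 2(s-\gamma(G_*))$ and $\degree(G_1)/\degree(G_2)\le s-\gamma(G_*)+1$ simultaneously. If instead $x$ is named, I would compare $x$ to itself in $G_2$, writing its degree as $p+q$ with $p$ the coloring-invariant named neighbours and $q$ the distinct anonymous neighbours. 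The incident stars merged away in passing to $G_2$ number at most the total number of contractions, which is at most $s-\gamma(G_*)$, bounding the additive gap; and whenever $q$ collapses all the way to $1$ the incident stars form an independent set and so number at most $s-\gamma(G_*)+1$, bounding the multiplicative gap. Either way the named case is dominated by the anonymous one.

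For tightness I would use the edgeless star graph: from $\mathcal{T}=\{(\sigma_i,*_i,\sigma_i')\}_{i=1,\dots,s}$ on $2s$ distinct named nodes, $G_*$ has no edges and $\gamma(G_*)=1$. Taking $G_1$ to merge all stars into one node (degree $2s$) and $G_2=G_C$ (maximal degree $2$) gives $\degree(G_1)-\degree(G_2)=2s-2=2(s-\gamma(G_*))$ and $\degree(G_1)/\degree(G_2)=s=s-\gamma(G_*)+1$, so a single pair meets both bounds; Lemma~\ref{lemma:allstars} lets one realize other prescribed values of $\gamma(G_*)$. I expect the main obstacle to be exactly the named-vertex subcase of the upper bounds---ruling out that a high-degree named node produces a larger spread than an anonymous one---and the clean way through is the independence-number inequality above together with the invariance of named--named edges.
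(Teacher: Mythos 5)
Your proof is correct and follows essentially the same route as the paper's: bound the size of any color class (equivalently, any independent set of $G_*$) by $s-\gamma(G_*)+1$, observe that each star contributes exactly two incident links so a merged anonymous node has degree at most $2(s-\gamma(G_*)+1)$ while every inferrable topology retains an anonymous node of degree at least $2$, and use the conflict-free trace set $\mathcal{T}=\{(v_i,*_i,w_i)\}$ for tightness. Your explicit handling of the named-vertex subcase via the independence-number inequality is actually more careful than the paper's rather terse treatment of that case.
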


Another important topology measure that indicates how well meshed
the network is, is the number of triangles.
\begin{lemma}\label{lemma:noftriangles}
Let $C_3(G)$ be the number of cycles of length $3$ of the graph $G$.
It holds that $C_3(G_1)-C_3(G_2)\leq 2s(s-1)$, which can be reached.
The relative error $C_3(G_1)/C_3(G_2)$ can be arbitrarily large
unless the number of links between non-anonymous nodes exceeds
$n^2/4$ in which case the ratio is upper bounded by $2s(s-1)+1$.
\end{lemma}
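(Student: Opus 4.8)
The plan is to separate the triangles of any inferrable topology into those spanned only by named nodes and those incident to at least one anonymous node. Since the adjacencies between two named nodes are fixed by the traces (they are exactly the consecutive named-node transitions, independent of how stars are merged), the all-named triangles are identical in $G_1$ and $G_2$. Hence $C_3(G_1)-C_3(G_2)$ equals the difference $t(G_1)-t(G_2)$ of the numbers of triangles incident to anonymous nodes, and it suffices to control the latter. I would anchor the whole comparison at the canonic graph $G_C$, recalling that every inferrable $G$ is obtained from $G_C$ by contracting groups of stars that are pairwise non-adjacent in $G_*$, and that in $G_C$ every star has degree at most two (its predecessor and successor in its unique trace).

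First I would show that contraction never destroys a triangle. A triangle of $G_C$ lives on three pairwise-adjacent vertices; no two of them can be identified by a merge, because mergeable stars are non-adjacent in $G_C$ (two consecutive stars in a trace are always in conflict, as identifying them would create a self-loop and violate the loop-freeness of \RULEONE). Thus the three vertices survive as three distinct vertices with all three edges intact, so $t(G)\ge t(G_C)$ for every inferrable $G$; in particular $t(G_2)\ge t(G_C)$. Next comes the upper bound $t(G)\le t(G_C)+2s(s-1)$. I would charge every ``new'' triangle of $G$ (one that does not lift to a triangle of $G_C$) to an unordered pair of distinct stars that a contraction brought together: at one of its anonymous vertices a new triangle must use two of its edges realized by two different stars $*_i,*_j$ of the same color class (otherwise one could choose common representatives and lift the triangle to $G_C$), and I charge it to $\{*_i,*_j\}$. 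Because each star has at most two neighbors in $G_C$, the pair $\{*_i,*_j\}$ fixes the anonymous vertex and leaves at most $2\times 2=4$ choices for the remaining two triangle vertices, so each pair is charged at most four times; since every new triangle is charged at least once, $t(G)-t(G_C)\le 4\binom{s}{2}=2s(s-1)$. Combining with $t(G_2)\ge t(G_C)$ yields $C_3(G_1)-C_3(G_2)=t(G_1)-t(G_2)\le 2s(s-1)$.

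For tightness I would exhibit the trace set $\{(a_i,*_i,b_i)\}_{i=1}^{s}$ enriched with the traces realizing a clique on the $2s$ named endpoints: then $G_C$ has exactly $s$ anonymous triangles while the topology merging all stars into one center adjacent to the whole clique has $\binom{2s}{2}=2s^2-s$, a difference of exactly $2s(s-1)$ (one checks $\alpha$-consistency for small enough $\alpha$). The main obstacle here is the charging bookkeeping: ensuring that every new triangle, including those with two or three anonymous vertices, is charged to at least one same-class star pair and that the per-pair count of four is never exceeded.

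Finally, for the ratio I would invoke Mantel's theorem. If the number $\nu$ of links between named nodes exceeds $n^2/4$, the named subgraph already contains a triangle, so $C_3(G_2)\ge 1$; together with the additive bound $C_3(G_1)\le C_3(G_2)+2s(s-1)$ this gives $C_3(G_1)/C_3(G_2)\le 1+2s(s-1)$. Conversely, when $\nu\le n^2/4$ I would build a triangle-free canonic graph in which a single admissible merge closes a triangle---for instance traces $(p,*_1,x)$, $(q,*_2,y)$ together with the named edge $(x,y)$, where $G_C$ is triangle-free but merging $*_1,*_2$ creates the triangle $\{*_{12},x,y\}$---so that $C_3(G_2)=0<C_3(G_1)$ and the ratio is unbounded.
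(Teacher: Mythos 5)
Your overall strategy is the same as the paper's: separate the all-named triangles (identical in $G_1$ and $G_2$ since named--named adjacencies are fixed by the traces) from the anonymous-incident ones, and control the latter by comparing against $G_C$. The mechanism for the upper bound differs: the paper counts at most $\binom{\degree(u)}{2}$ triangles per anonymous node and maximizes this at a single node of degree $2s$ (via Lemma~\ref{lemma:degree}), obtaining $s(2s-1)$ and then subtracting the at most $s$ triangles of $G_C$; your charging of each ``new'' triangle to a same-color pair of stars, with at most four triangles per pair, is a genuinely different and cleaner route to the same $4\binom{s}{2}=2s(s-1)$, and your handling of the ratio (Mantel plus the additive bound) matches the stated $2s(s-1)+1$ more directly than the paper's computation.

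There is, however, a genuine gap in the step ``contraction never destroys a triangle, hence $t(G)\ge t(G_C)$.'' What your argument shows is that every triangle of $G_C$ survives as \emph{a} triangle of $G$ (its three vertices remain pairwise distinct and adjacent); it does not show that the induced map on triangles is injective, and in general it is not: two triangles of $G_C$ sharing a named edge can collapse onto one. Concretely, for $\mathcal{T}=\{(u,*_1,v),(u,*_2,v),(u,v)\}$ (with $\alpha\le 1/2$ so that the trace set is consistent), neither condition of Lemma~\ref{lem:lemmaInferencemulti} puts $*_1$ and $*_2$ in conflict, $G_C$ has two anonymous-incident triangles, and the topology merging $*_1$ with $*_2$ has only one; so $t(G_2)\ge t(G_C)$ is false and the final subtraction $t(G_1)-t(G_2)\le\bigl(t(G_C)+2s(s-1)\bigr)-t(G_C)$ does not go through as written. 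To be fair, the paper's own proof asserts the same monotonicity (``the number of such triangles is minimized in $G_C$'') on the equally insufficient grounds that degrees are minimized there, so this is a shared weakness rather than a point where you diverge from the paper; but a self-contained argument would additionally have to account for triangles of $G_C$ that are identified in $G_2$ (for instance by charging each such collision to the responsible same-class star pair of $G_2$ and balancing this loss against the $4\binom{s}{2}$ budget), which neither your write-up nor the paper currently does.
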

\begin{proof}
\emph{Upper bound:} Each node which is part of a triangle has at
least two incident edges. Thus, a node $v$ can be part of at most
$\degree(v)\choose2$ triangles, where $\degree(v)$ denotes $v$'s
degree. As a consequence the number of triangles containing an
anonymous node in an inferrable topology with $a$ anonymous nodes
$u_1,\ldots u_a$ is at most $\sum_{j=1}^{a}$~$\degree(u_j)\choose2$.
Given $s$, this sum is maximized if $a=1$ and $\degree(u_1)=2s$ as
$2s$ is the maximum degree possible due to Lemma~\ref{lemma:degree}.
Thus there can be at most $s\cdot(2 s-1)$ triangles containing an
anonymous node in $G_1$. The number of triangles with at least one
anonymous node is minimized in $G_C$ because in the canonic graph
the degrees of the anonymous nodes are minimized, i.e, they are
always exactly two. As a consequence there cannot be more than $s$
such triangles in $G_C$.

If the number of such triangles
in $G_C$ is smaller by $x$, then the number of of triangles with at
least one anonymous node in the topology $G_1$ is upper bounded by
$s\cdot(2s-1)-x$. The difference between the triangles in $G_1$ and
$G_2$ is thus at most $s(2s-1)-x-s+x=2s(s-1)$.

\emph{Example meeting this bound:} If the non-anonymous
nodes form a complete graph and all star nodes can be merged into
one node in $G_1$ and $G_2=G_C$, then the difference in the number
of triangles matches the upper bound. Consequently it holds for the
ratio of triangles with anonymous nodes that it does not exceed
$(s(2s-1)-x)/(s-x).$ Thus the ratio can be infinite, as $x$ can
reach $s$. However, if the number of links between $n$ non-anonymous
nodes exceeds $n^2/4$ then there is at least one triangle, as the
densest complete bipartite graph contains at most $n^2/4$ links.
\end{proof}

\section{Full Exploration}\label{sec:fullyexp}

So far, we assumed that the trace set $\mathcal{T}$ contains each
node and link of $G_0$ at least once. At first sight, this seems to
be the best we can hope for. However, sometimes traces exploring the
vicinity of anonymous nodes in different ways yields additional
information that help to characterize $\mathcal{G}_{\mathcal{T}}$
better.

This section introduces the concept of \emph{fully explored
networks}: $\mathcal{T}$ contains sufficiently many traces such that
the distances between non-anonymous nodes can be estimated
accurately.
\begin{definition}[Fully Explored Topologies]\label{def:FE}
A topology $G_0$ is fully explored by a trace set $\mathcal{T}$ if
it contains all nodes and links of $G_0$ and for each pair $\{u,v\}$
of non-anonymous nodes in the same component of $G_0$ there exists a
trace $T\in\mathcal{T} $ containing both nodes $u\in T$ and $v\in
T$.
\end{definition}

In some sense, a trace set for a fully explored network is the best
we can hope for. Properties that cannot be inferred well under the
fully explored topology model are infeasible to infer without
additional assumptions on $G_0$. In this sense, this section
provides upper bounds on what can be learned from topology
inference. In the following, we will constrain ourselves to routing
along shortest paths only ($\alpha=1$).



Let us again study the properties of the family of inferrable
topologies fully explored by a trace set. Obviously, all the upper
bounds from Section~\ref{sec:inferrable} are still valid for fully
explored topologies. In the following, let $G_1, G_2 \in
\mathcal{G}_{\mathcal{T}}$ be arbitrary representatives of
$\mathcal{G}_{\mathcal{T}}$ for a fully explored trace set
$\mathcal{T}$. A direct consequence of the Definition~\ref{def:FE}
concerns the number of connected components and the stretch. (Recall
that the stretch is defined with respect to named nodes only, and
since $\alpha=1$, a 1-consistent inferrable topology cannot include
a shorter path between $u$ and $v$ than the one that must appear in
a trace of $\mathcal{T}$.)
\begin{lemma}\label{lemma:cc-fu-ex}
It holds that $\Comp(G_1)=\Comp(G_2)$ ($=\Comp(G_0)$) and the
stretch is 1.
\end{lemma}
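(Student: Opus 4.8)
The plan is to prove the two parts of Lemma~\ref{lemma:cc-fu-ex} separately, relying crucially on the full-exploration hypothesis (Definition~\ref{def:FE}) and the restriction to shortest-path routing ($\alpha=1$).

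\medskip

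\noindent\textbf{Connected components.} First I would show $\Comp(G_1)=\Comp(G_0)$ for an arbitrary inferrable $G_1$; since $G_2$ is also inferrable, the equality $\Comp(G_1)=\Comp(G_2)$ follows immediately. The key structural fact is that full exploration forces every pair of named nodes lying in the same component of $G_0$ to co-occur in some trace $T\in\mathcal{T}$. By \RULEONE, the symbols of $T$ must map to an actual path in any inferrable topology, so $u$ and $v$ are connected in $G_1$ as well. Conversely, I would argue that $G_1$ cannot connect two named nodes that lie in different components of $G_0$: by \RULEZERO\ every edge of $G_1$ comes from a transition in some trace, and by \RULEONE\ each such transition corresponds to a real adjacency in $G_0$, so any path in $G_1$ projects (via $\map$ applied to $G_0$) onto a walk in $G_0$ connecting the same named endpoints. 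Hence named nodes are co-connected in $G_1$ exactly when they are co-connected in $G_0$. The only subtlety is anonymous nodes, but since every trace begins and ends with a named node, each anonymous node (and each merged group of stars) is attached to named nodes in its own trace and therefore sits inside the component determined by those named endpoints; no anonymous node can create a spurious extra component or bridge two genuine ones. This yields a bijection between components of $G_1$ and of $G_0$, establishing $\Comp(G_1)=\Comp(G_0)=\Comp(G_2)$.

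\medskip

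\noindent\textbf{Stretch is $1$.} For the second claim I would fix two named nodes $u,v$ in the same component and show $\dist_{G_1}(u,v)=\dist_{G_0}(u,v)$. The lower bound $\dist_{G_1}(u,v)\ge \dist_{G_0}(u,v)$ is the easy direction and is essentially the parenthetical remark in the text: with $\alpha=1$, \RULETWO\ says any trace distance is at least the $G_0$ shortest-path distance, and any path in $G_1$ again projects to a walk in $G_0$, so $G_1$ cannot contain a path shorter than the true shortest path. For the upper bound $\dist_{G_1}(u,v)\le \dist_{G_0}(u,v)$, I would invoke full exploration: there is a trace $T$ containing both $u$ and $v$, and since $\alpha=1$ the routing is along shortest paths, so $d_T(u,v)=\dist_{G_0}(u,v)$. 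By \RULEONE\ this trace segment realizes a path of exactly that length between $u$ and $v$ in $G_1$, giving $\dist_{G_1}(u,v)\le \dist_{G_0}(u,v)$. Combining the two bounds, $\dist_{G_1}(u,v)=\dist_{G_0}(u,v)$ for every named pair, so the ratio in Definition~\ref{def:diststretch} is identically $1$.

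\medskip

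\noindent\textbf{Main obstacle.} The routine parts are the two inequalities once the right axiom is applied; the step that deserves the most care is justifying that a path in an inferrable topology $G_i$ always \emph{projects} to a walk of no greater length in $G_0$, i.e.\ that merging stars (contracting an independent set of $G_*$) can only shorten distances and can never connect named nodes that $G_0$ keeps apart. I would make this precise by pushing each edge of $G_i$ back through the trace transition that generated it (\RULEZERO) and then through \RULEONE\ to an adjacency in $G_0$, so that any $G_i$-path induces a $G_0$-walk on the same named endpoints. Care is also needed with the $\lceil\cdot\rceil$ in \RULETWO, but at $\alpha=1$ the ceiling is vacuous and the argument is clean.
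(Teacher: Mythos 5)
The paper gives no standalone proof of this lemma --- it is presented as a ``direct consequence'' of Definition~\ref{def:FE}, with the stretch claim justified only by the parenthetical remark preceding the statement --- so the question is whether your more detailed argument is sound. Your stretch argument and the ``forward'' component direction are correct and are exactly the intended reasoning: full exploration plus \RULEONE\ forces any two named nodes co-connected in $G_0$ to be co-connected in every inferrable topology, and with $\alpha=1$, \RULETWO\ forbids any shorter path while the witnessing trace realizes a path of length exactly $\dist_{G_0}(u,v)$, so the stretch is $1$.

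The gap is in the converse component direction, precisely at the step you yourself single out as delicate. Pushing each edge of $G_1$ back to a trace transition and then to an adjacency in $G_0$ yields an edge sequence in $G_0$, but that sequence is a walk only if consecutive edges share an endpoint; at a merged anonymous node $\map(*_1)=\map(*_2)$ the two incident edges push back to adjacencies involving the \emph{distinct} $G_0$-nodes from which $*_1$ and $*_2$ originate, so the induced sequence ``teleports'' there and does not certify that the named endpoints share a component of $G_0$. This is not a removable technicality. Take $G_0$ with two components $u$--$x$--$v$ and $u'$--$y$--$v'$ ($x,y$ anonymous) and the fully explored trace set $\{(u,*_1,v),\,(u',*_2,v')\}$: both conditions of Lemma~\ref{lem:lemmaInferencemulti} involve canonic distances that are infinite across components, so $\{*_1,*_2\}\notin E_*$, and merging the two stars yields a topology satisfying all axioms with one component instead of two. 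Your argument (and, read literally, the lemma) only goes through when $G_0$ is connected, in which case the forward direction alone already shows every inferrable topology is connected, since each anonymous node is attached to the named endpoints of its own trace. You should either restrict to that case explicitly or supply an argument excluding cross-component mergers; the projection step as written will not do it.
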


The proof for the claims of the following lemmata are analogous to
our former proofs, as the main difference is the fact that there
might be more conflicts, i.e., edges in $G_*$.
\begin{lemma}\label{lemma:nofnodes_u}
For fully explored networks it holds that $|V_1|-|V_2|\leq s -
\gamma(G_*)\leq  s -1$ and $|V_1|/|V_2|\leq
(n+s)/(n+\gamma(G_*))\leq (2+s)/3$. Moreover, $|E_1|-|E_2|\in
2(s-\gamma(G_*))$ and $|E_1|/|E_2|\leq (\nu+2s)/(\nu+2)\leq s$,
where $\nu$ denotes the number of links between non-anonymous nodes.
There are traces with inferrable topology $G_1, G_2$ reaching these
bounds.
\end{lemma}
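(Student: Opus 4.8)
The plan is to establish each of the four bounds in Lemma~\ref{lemma:nofnodes_u} by reusing the arguments from Lemma~\ref{lemma:nofnodesedges}, since the only structural change from the general setting to the fully explored setting is that the star graph $G_*$ may have \emph{more} edges (more conflicts forced by the additional traces), but the quantities in the bounds are expressed in terms of $s$, $n$, $\nu$, and $\gamma(G_*)$, which remain meaningful. First I would recall that any inferrable $G\in\mathcal{G}_{\mathcal{T}}$ induces a proper coloring of $G_*$ (Lemma~\ref{lemma:bijection2}), so the number of anonymous nodes in $G$ equals the number of color classes used, which ranges between $\gamma(G_*)$ (the chromatic number) and $s$ (the all-distinct coloring giving $G_C$). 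Since $|V(G)|=n+(\text{number of anonymous nodes})$, the extremes $n+\gamma(G_*)$ and $n+s$ immediately yield $|V_1|-|V_2|\le s-\gamma(G_*)$ and $|V_1|/|V_2|\le (n+s)/(n+\gamma(G_*))$; the further bounds $s-1$ and $(2+s)/3$ follow from $\gamma(G_*)\ge 1$ and from the worst case $n=2$ forced by each star needing two named neighbors.

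For the edge bounds I would argue as in Lemma~\ref{lemma:nofnodesedges}: merging two stars that are adjacent in $G_C$ but share no common neighbor removes two edges (each merged star contributes its incident links), so contracting from the $s$-coloring down to a $\gamma(G_*)$-coloring changes the edge count by at most $2(s-\gamma(G_*))$, giving $|E_1|-|E_2|\le 2(s-\gamma(G_*))$. The ratio bound uses $\nu$, the number of links among named nodes: the denominator is minimized when all stars collapse to one node with the fewest possible incident edges ($\nu+2$, two edges for the single anonymous node) and the numerator is maximized in $G_C$ with $\nu+2s$ star-incident edges, yielding $(\nu+2s)/(\nu+2)\le s$ for $\nu\ge 2$.

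The tightness claims (``there are traces reaching these bounds'') require exhibiting a fully explored trace set realizing the extremes, and this is where care is needed: unlike the general case, the witnessing trace set must satisfy Definition~\ref{def:FE}, i.e.\ every pair of named nodes in a common component must co-occur in some trace. The natural construction is a single anonymous center joined to $2s$ (or a suitable number of) named nodes, with traces that traverse the center and additionally pairwise connect all named nodes so that full exploration holds while $G_*$ stays edgeless, permitting both the all-merged ($\gamma=1$) and all-split ($s$-coloring) inferrable topologies. I would check that adding the pair-covering traces does not introduce spurious conflict edges in $G_*$ via Lemma~\ref{lem:lemmaInferencemulti}; this consistency check is the main obstacle, since the extra traces needed for full exploration could, if laid out carelessly, force $\gamma(G_*)>1$ and break the lower extreme. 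Choosing the named-node interconnections to route around the center (so no condition of Lemma~\ref{lem:lemmaInferencemulti} is triggered) resolves this, and the same family simultaneously witnesses all four bounds.
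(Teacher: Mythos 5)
Your derivation of the four upper bounds is correct and is essentially the paper's own argument: the paper proves this lemma by declaring it ``analogous'' to Lemma~\ref{lemma:nofnodesedges}, since all four bounds depend only on $s$, $n$, $\nu$ and $\gamma(G_*)$ and full exploration can only add edges to $G_*$. Where you diverge is the tightness witness. You treat the construction of a fully explored extremal trace set as the main obstacle and build a new family (a star with $2s$ named leaves plus pair-covering traces that detour around the center); this works, and your instinct to route the pair-covering traces through fresh intermediate nodes is the right one --- direct length-$1$ traces between the endpoints of a $(v,*_i,w)$ trace would make the whole set $1$-inconsistent, not merely add conflict edges as you suggest. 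But the obstacle is illusory here: the witness already used for Lemma~\ref{lemma:nofnodesedges}, namely $\mathcal{T}=\{(v,*_i,w)\mid 1\leq i\leq s\}$ with only two named nodes, satisfies Definition~\ref{def:FE} verbatim (every trace contains both named nodes), has an edgeless $G_*$, and simultaneously attains all four extremes including the secondary bounds $(2+s)/3$ and $s$, which require $n=2$ and $\nu=0$ and which your $2s$-leaf construction does not reach. So the paper's route is shorter and strictly stronger on the tightness claim; your route generalizes more readily to settings where the two-named-node example is unavailable (as in the paper's proof of Lemma~\ref{lemma:degree_u}, which uses exactly your detour trick).
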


\begin{lemma}\label{lemma:degree_u}
For the maximal node degree, we have $\degree(G_1)-\degree(G_2)\leq
2(s-\gamma(G_*))$ and $\degree(G_1)/\degree(G_2)\leq
s-\gamma(G_*)+1$. There are instances $G_1,G_2$ that reach these
bounds.
\end{lemma}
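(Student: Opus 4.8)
The plan is to prove the two inequalities exactly as in Lemma~\ref{lemma:degree} and then re-engineer the tightness example, since in the fully explored setting the only thing that changes is the star graph $G_*$: with $\alpha=1$, full exploration typically forces many stars that are far apart in the traces to become non-mergeable via Lemma~\ref{lem:lemmaInferencemulti}, which adds edges to $G_*$ and can only raise $\gamma(G_*)$. Both bounds are already phrased in terms of this (fully explored) $\gamma(G_*)$, so the upper-bound direction transfers with no extra work.

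For the upper bound I would argue as follows. By Lemma~\ref{lemma:bijection2} every inferrable topology induces a proper coloring of $G_*$, and by Definition~\ref{def:colorinducedGraph} each anonymous node of $G_1$ is obtained by contracting one color class. Since every star has exactly two neighbors in $G_C$, an anonymous node built from a class of $k$ stars has degree at most $2k$; as a proper coloring uses at least $\gamma(G_*)$ colors, the largest class contains at most $s-\gamma(G_*)+1$ stars, so $\degree(G_1)\le 2(s-\gamma(G_*)+1)$ (the bookkeeping that named-node degrees, which are maximized in $G_C$ and only shrink when their incident stars merge, are not the binding case is handled exactly as in Lemma~\ref{lemma:degree}). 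Because every anonymous node has at least two distinct neighbors, $\degree(G_2)\ge 2$, and combining the two estimates gives $\degree(G_1)-\degree(G_2)\le 2(s-\gamma(G_*))$ and $\degree(G_1)/\degree(G_2)\le s-\gamma(G_*)+1$.

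The difficult direction is producing a fully explored instance that meets these bounds. The disconnected witness used for Lemma~\ref{lemma:degree} (one anonymous center split into $s$ independent length-two traces, then all $s$ stars merged into a degree-$2s$ node) is no longer admissible: by Lemma~\ref{lemma:cc-fu-ex} all inferrable topologies of a fully explored trace set share the same number of components, so the component-collapsing merge is forbidden, and for $\alpha=1$ those cross-trace stars would in any case become pairwise conflicting. The goal is therefore to build a single connected, fully explored, $\alpha=1$ instance in which a maximum independent set $I\subseteq V_*$ of size $s-\gamma(G_*)+1$ contracts into one anonymous node of degree $2|I|$, while some other proper coloring of $G_*$ induces a topology of maximum degree $2$; this makes the realized gap exactly $2|I|-2=2(s-\gamma(G_*))$, and the ratio $|I|=s-\gamma(G_*)+1$. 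Concretely, the stars of $I$ must have pairwise disjoint named neighborhoods (so the contraction genuinely has degree $2|I|$) yet be mutually non-conflicting, and the remaining $\gamma(G_*)-1$ stars must act as connectors that keep the instance connected (so the contraction does not change the component count of Lemma~\ref{lemma:cc-fu-ex}) and simultaneously realize the chromatic number.

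I expect this construction to be the main obstacle. Disjointness of the named neighborhoods pushes the stars of $I$ apart in the traces, whereas both non-conflict in $G_*$ and $\alpha$-consistency of the contracted topology (no created short-cut may beat a witnessed trace distance) pull them together; balancing these opposing forces, while still covering every named pair without inflating named-node degrees past $2$ in the low-degree representative, is precisely what was routine in the general case but must be redone once the component count is frozen. Verifying that the contracted topology still satisfies \RULETWO\ with $\alpha=1$ — so that it indeed lies in $\mathcal{G}_{\mathcal{T}}$ — is the step I would check most carefully.
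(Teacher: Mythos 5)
Your upper-bound argument is fine and matches the paper: the bounds from Lemma~\ref{lemma:degree} are stated in terms of the (now denser) star graph of the fully explored trace set, so they carry over verbatim, which is exactly what the paper says (``the proof for the upper bound is analogous to the case without full exploration'').

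The genuine gap is in the tightness claim: you correctly identify that the witness from Lemma~\ref{lemma:degree} must be repaired for full exploration, you list the properties a repaired instance must have, and then you stop --- ``I expect this construction to be the main obstacle'' is not a construction, so the statement ``there are instances $G_1,G_2$ that reach these bounds'' is left unproven. The paper resolves the tension you describe with a short trick: keep the traces $T_i=(v_i,*_i,w_i)$ for $1\leq i\leq s$, and for every pair $\{v,w\}$ of named nodes not yet sharing a trace, add a \emph{fresh} named connector node $u$ to $G_0$ together with the trace $T=(v,u,w)$. This makes the trace set fully explored and the instance connected (so your worry about Lemma~\ref{lemma:cc-fu-ex} disappears), and---crucially---it creates no new edges in $G_*$: each added trace witnesses a named-pair distance of $2$, and after merging all $s$ stars into one hub the distance between those named nodes is still $2$ (via the hub), so condition~(ii) of Lemma~\ref{lem:lemmaInferencemulti} never fires and $\gamma(G_*)$ remains $1$. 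Hence the hub has degree $2s$ in $G_1$ while every anonymous node of $G_2=G_C$ has degree $2$, giving difference $2(s-1)$ and ratio $s$. (One caveat you could have caught: the connector nodes themselves acquire nontrivial degrees, and the named nodes $v_i$ become adjacent to many connectors, so one must check that the \emph{maximum} degree of $G_2$ is still realized appropriately; the paper waves this away with ``this does not increase the maximum degree,'' which is itself debatable. But your proposal never reaches the point where this check would arise.)
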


From Lemma~\ref{lemma:cc-fu-ex} we know that fully explored
scenarios yield a perfect stretch of one. However, regarding the
diameter, the situation is different in the sense that distances
between anonymous nodes play a role.
\begin{lemma}\label{lemma:diam_u}
For connected topologies $G_1,G_2$ it holds that
$\diam(G_1)/\diam(G_2)\leq 2$, where $\diam$ denotes the graph
diameter and $\diam(G_1)>\diam(G_2)$. There are instances $G_1,G_2$
that reach this bound. Moreover, there are instances with
$\diam(G_1)-\diam(G_2)= s/2$.
\end{lemma}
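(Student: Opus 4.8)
The plan is to treat the three assertions separately: the ratio upper bound, an instance attaining ratio $2$, and an instance attaining the additive gap $s/2$.

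For the \textbf{upper bound}, I would first invoke Lemma~\ref{lemma:cc-fu-ex}: since $\mathcal{T}$ is fully explored and $\alpha=1$, the stretch is $1$, so the distance between any two named nodes is the \emph{same} in every inferrable topology. Write $D^\ast$ for this common ``named diameter'' and note $\diam(G_i)\ge D^\ast$ for $i\in\{1,2\}$. The key claim is that in any inferrable fully explored $G$, every anonymous node lies within distance $D^\ast/2$ of a named node: pick any star $*_k$ mapped to that node; it occurs strictly inside some trace $T$ whose endpoints are named, and since $\alpha=1$ forces $T$ to be a shortest path of length at most $D^\ast$, the star is within $\lfloor D^\ast/2\rfloor$ hops of the nearer endpoint, while this trace segment projects to a walk of the same length in $G$ (merging only creates shortcuts). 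Now take a diameter-realizing pair $(x,y)$ of $G$; replacing each anonymous endpoint by a named node $m$ at distance $\le D^\ast/2$ and using the triangle inequality gives $\diam(G)\le D^\ast/2+D^\ast+D^\ast/2=2D^\ast$. Applying this to $G_1$ yields $\diam(G_1)\le 2D^\ast\le 2\,\diam(G_2)$, which is the claimed ratio.

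For the \textbf{ratio-$2$ instance}, I would use a ``subdivided complete graph.'' Let $G_0$ be a single anonymous hub joined to $m$ named nodes $\sigma_1,\dots,\sigma_m$, and let $\mathcal{T}=\{(\sigma_i,*_{ij},\sigma_j):i<j\}$, so every named pair co-occurs (full exploration) and $D^\ast=2$. One checks the star graph $G_*$ is edgeless: condition~(i) of Lemma~\ref{lem:lemmaInferencemulti} would require $\lceil d_{T_{ij}}(*_{ij},u)\rceil=1>d_C(u,*_{kl})\ge 1$, and condition~(ii) would require $2=\lceil d_T(u,v)\rceil>d_C(u,*_{ij})+d_C(v,*_{kl})\ge 1+1$, neither of which can hold. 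Hence both the all-distinct coloring ($G_1=G_C$, where each named pair gets a private bridge, so $\diam=4$) and the all-equal coloring ($G_2=G_0$, the hub, $\diam=2$) are inferrable and connected, giving $\diam(G_1)/\diam(G_2)=2$.

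For the \textbf{additive gap}, the idea is to force $\diam(G_C)=\diam(G_1)$ to exceed the named diameter by an amount that collapses in a second inferrable topology. The mechanism is an anonymous node that sits deep inside a long run of stars of one trace in $G_C$ (hence far from every named node and inflating the diameter), but which in another admissible merging $G_2$ is identified with stars adjacent to a named \emph{shortcut} node, so that its eccentricity drops. I would build a trace set whose canonic graph has diameter $\diam(G_2)+s/2$, making the named part large enough that $D^\ast$ grows with $s$ (as the ratio bound forces $\diam(G_2)\ge s/2$), and aligning the deep stars with named nodes that \emph{genuinely} provide the short connections in $G_0$, so that no named distance is shortened and \RULETWO{} with $\alpha=1$ is preserved. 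The \textbf{main obstacle} is exactly this construction: the $\alpha=1$ and full-exploration constraints are severe, since named distances are pinned by the stretch-$1$ property and every star already lies within $D^\ast/2$ of a named node, so one must position the shortcut nodes carefully and verify both that the chosen merging is a proper coloring of $G_*$ (Lemma~\ref{lemma:bijection2}) and that every named pair still co-occurs on a shortest-path trace. By comparison, the upper bound and the ratio-$2$ instance are routine.
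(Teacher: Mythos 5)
Your upper-bound argument is correct and is essentially the paper's own: by Lemma~\ref{lemma:cc-fu-ex} the named distances are pinned to a common value $D^\ast$, every star lies strictly inside a shortest-path trace between two named endpoints and is therefore within $D^\ast/2$ of a named node in every inferrable topology, and the triangle inequality yields $\diam(G_1)\leq 2D^\ast\leq 2\,\diam(G_2)$. Your ratio-$2$ instance (an anonymous hub with a private bridge star $*_{ij}$ for each named pair, with at least four named nodes so that two index-disjoint bridges exist) is a valid alternative to the paper's example, and your verification that $G_*$ is edgeless and that both the hub and $G_C$ are inferrable is sound.

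The genuine gap is the third claim, the existence of instances with $\diam(G_1)-\diam(G_2)=s/2$: you describe a mechanism and then explicitly declare the construction itself to be ``the main obstacle,'' so this part of the lemma is not proved. Your own observations show the difficulty is real and cannot be absorbed into your ratio example: the ratio bound forces $\diam(G_2)\geq s/2$ in any witness, so the named part must already realize a diameter growing linearly in $s$, whereas your hub example has gap $4-2=2$ while $s=\binom{m}{2}$. The paper closes this with one concrete construction that witnesses the ratio bound and the additive bound simultaneously: a center $c$ with four named rays of length $k$ ending at $u_1,\dots,u_4$, plus two chains of $2k+1$ anonymous nodes joining $u_1$ to $u_2$ and $u_3$ to $u_4$, each chain having a distinguished middle star ($\sigma$, resp.\ $\sigma'$). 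The trace set consists of the six shortest-path traces among the $u_i$ through $c$ (which gives full exploration and pins all named distances) together with the two traces along the anonymous chains; taking $G_1=G_C$ and $G_2$ the topology that merges $\sigma$ with $\sigma'$ gives $\diam(G_1)=4k+2$, $\diam(G_2)=2k+1$, and $s=4k+2$, hence a difference of exactly $s/2$ and a ratio of $2$. To complete your proof you would need to exhibit such an explicit trace set and carry out the checks you correctly list: that merging $\sigma$ and $\sigma'$ is a proper coloring of $G_*$, that the merged topology remains consistent with \RULETWO\ for $\alpha=1$, and that every named pair co-occurs in a shortest-path trace.
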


The number of triangles with anonymous nodes can still not be
estimated accurately in the fully explored scenario.
\begin{lemma}\label{lemma:noftriangles_u}
There exist graphs where $C_3(G_1)-C_3(G_2)= s(s-1)/2$, and the
relative error $C_3(G_1)/C_3(G_2)$ can be arbitrarily large.
\end{lemma}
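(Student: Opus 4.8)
The plan is to exhibit two fully explored trace sets, both originating from a network $G_0$ with a single anonymous node (so that $\alpha=1$ routing consistency holds trivially), whose star graphs $G_*$ turn out to be \emph{edgeless}. When $G_*$ has no edges, both the coloring that merges all $s$ stars into one anonymous node and the coloring that keeps them separate (yielding $G_C$) are proper, hence both induced topologies are genuinely inferrable by Theorem~\ref{thm:gc-inferable} and Lemma~\ref{lemma:bijection2}. The gap in triangle counts between these two extreme topologies will realize each of the two claims, one construction per claim.

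For the additive gap $s(s-1)/2$ I would take named nodes $z,v_1,\dots,v_s$, let the $v_i$ form a clique $K_s$, and add one anonymous node $a$ adjacent to $z$ and to every $v_i$. Since $z$ reaches each $v_i$ only through $a$, it lies at distance two from every $v_i$, so the shortest-path traces $(z,*_i,v_i)$ legitimately expose the edges $\{a,v_i\}$ and $\{a,z\}$ under $\alpha=1$, while the clique edges are exposed by the length-one traces $(v_i,v_j)$. This set is fully explored in the sense of Definition~\ref{def:FE}: every pair $(z,v_i)$ co-occurs in $(z,*_i,v_i)$ and every pair $(v_i,v_j)$ in $(v_i,v_j)$. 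Taking $G_2=G_C$ and letting $G_1$ merge all stars into $a$, the only anonymous-node triangles are the $\binom{s}{2}$ triangles $\{a,v_i,v_j\}$ present in $G_1$ but not in $G_C$ (in $G_C$ each star has the non-adjacent neighbours $z$ and $v_i$), whereas the $\binom{s}{3}$ clique triangles are common to both. Hence $C_3(G_1)-C_3(G_2)=\binom{s}{2}=s(s-1)/2$.

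For the unbounded ratio I would instead force the named subgraph to be triangle-free so that $C_3(G_C)=0$. Replace the clique by a matching: named nodes $z$ together with pairs $(u_i,v_i)$, $i=1,\dots,t$, carrying only the edges $\{u_i,v_i\}$, and an anonymous node $a$ adjacent to $z$ and to all $u_i,v_i$. Every pair of named nodes that is \emph{not} matched lies at distance two through $a$ and is joined by a length-two shortest-path trace $(x,*,y)$, while matched pairs use the length-one trace $(u_i,v_i)$; this makes the set fully explored. In $G_C$ no star joins an adjacent named pair (a trace $(u_i,*,v_i)$ would contradict $d_{G_0}(u_i,v_i)=1$ under $\alpha=1$), so $G_C$ is triangle-free, while merging all stars into $a$ creates exactly the $t$ triangles $\{a,u_i,v_i\}$. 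Thus $C_3(G_1)/C_3(G_2)=t/0$ is unbounded.

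The main obstacle, and the step I would check most carefully, is that the many traces \emph{forced} by full exploration do not introduce conflict edges into $G_*$ that would forbid contracting all stars. I would verify this directly through Lemma~\ref{lem:lemmaInferencemulti}: condition~(ii) applied to a length-one trace $(v_i,v_j)$ requires $1>d_C(v_i,*_p)+d_C(v_j,*_q)$, which is impossible since each canonic distance from a named node to a distinct star is at least $1$; applied to a length-two trace $(x,*,y)$ it requires $2>d_C(x,*_p)+d_C(y,*_q)$, again impossible for the same reason; and condition~(i), giving $\lceil\alpha\, d_T(*,w)\rceil=1>d_C(w,*_q)$, likewise never fires. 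Hence $G_*$ is edgeless in both constructions, all stars may be merged, both extreme topologies are inferrable (and consistent with stretch one, matching Lemma~\ref{lemma:cc-fu-ex}), and the two computed gaps establish the claim.
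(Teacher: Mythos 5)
Your proposal is correct and follows essentially the same route as the paper: your additive construction (a hub named node joined by length-two starred traces to a clique of named nodes, plus length-one clique traces) is the paper's construction up to renaming, and your ratio construction generalizes the paper's three-node example by the same idea of keeping the named subgraph triangle-free so that $C_3(G_C)=0$ while merging creates triangles. One small nit: Lemma~\ref{lemma:bijection2} gives the implication in the wrong direction (inferrable $\Rightarrow$ proper coloring), so inferrability of the merged topology really rests on your direct verification that the star graph is edgeless and that all named-pair distances ($1$ and $2$) are preserved under the contraction --- a check you do carry out.
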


\section{Conclusion}\label{sec:conclusion}

We understand our work as a first step to shed light onto the
similarity of inferrable topologies based on most basic axioms and
without any assumptions on power-law properties, i.e., in the worst
case. Using our formal framework we show that the topologies for a
given trace set may differ significantly. Thus, it is impossible to
accurately characterize topological properties of complex networks.
To complement the general analysis, we propose the notion of fully
explored networks or trace sets, as a ``best possible scenario''. As
expected, we find that fully exploring traces allow us to determine
several properties of the network more accurately; however, it also
turns out that even in this scenario, other topological properties
are inherently hard to compute. Our results are summarized in
Figure~\ref{fig:sumres}.

\begin{footnotesize}
\begin{figure*}[t]\centering
\begin{footnotesize}
  \begin{tabular}{| l | c | c | c | c |}
    \hline    Property/Scenario &  \multicolumn{2}{c}{Arbitrary } & \multicolumn{2}{|c|}{Fully
      Explored ($\alpha=1$)}\\ \hline
    & $G_1 - G_2$ & $G_1/G_2$ &  $G_1 - G_2$ & $G_1/G_2$ \\
    \hline
    \# of nodes& $\leq s-\gamma(G_*) $& $\leq (n+s)/(n+\gamma(G_*)) $&$ \leq s-\gamma(G_*) $&$ \leq (n+s)/(n+\gamma(G_*))$   \\
    \hline
    \# of links& $\leq 2(s-\gamma(G_*)) $& $\leq (\nu+2s)/(\nu+2) $&$\leq 2(s-\gamma(G_*)) $& $\leq (\nu+2s)/(\nu+2) $   \\ \hline

    \# of connected components& $\leq n/2 $& $\leq n/2 $&$ =0 $& $=1$   \\
    \hline
    Stretch & - & $\leq (N-1)/2 $& - & $ =1 $   \\ \hline
    Diameter & $\leq (s-1)/s\cdot (N-1)$ & $\leq s$ & $s/2$ ($\P$) & $2$  \\ \hline
    Max. Deg. & $\leq 2(s-\gamma(G_*))$ &$\leq s-\gamma(G_*)+1$ & $\leq 2(s-\gamma(G_*))$ & $\leq s-\gamma(G_*)+1$  \\ \hline
    Triangles &$\leq 2s(s-1)$ &$\infty$ & $\leq 2s(s-1)/2$ & $\infty$  \\ \hline
  \end{tabular}
\end{footnotesize}
  \caption{\bf Summary of our bounds on the properties of inferrable
  topologies. $s$ denotes the number of stars in the traces, $n$ is the number of named nodes, $N=n+s$, and $\nu$ denotes the number of links between named nodes.
  Note that trace sets meeting these bounds exist for all properties for which we have tight or upper bounds.
  For the two entries marked with ($\P$), only ``lower bounds'' are
  derived, i.e., examples that yield at least the corresponding accuracy; as the upper bounds from the arbitrary scenario do not
  match, how to close the gap remains an open question.
} \label{fig:sumres}
\end{figure*}
\end{footnotesize}

Our work opens several directions for future research. On a
theoretical side, one may study whether the minimal inferrable
topologies considered in, e.g.,~\cite{icdcn10,icdcn11}, are more
similar in nature. More importantly, while this paper presented
results for the general worst-case, it would be interesting to
devise algorithms that compute, for a \emph{given trace set},
worst-case bounds for the properties under consideration. For
example, such approximate bounds would be helpful to decide whether
additional measurements are needed. Moreover, maybe such algorithms
may even give advice on the locations at which such measurements
would be most useful.

\section*{Acknowledgments}

We would like to thank H.~B.~Acharya and Steve Uhlig.


\bibliographystyle{plain}
\bibliography{gstar}


\begin{appendix}

\section{Deferred Proofs}

\subsection{Proof of Theorem~\ref{thm:gc-inferable}}

Fix $\mathcal{T}$. We have to prove that $G_C$ fulfills \RULEZERO,
\RULEONE\ (which implies \RULETHREE) and \RULETWO.

\RULEZERO: The axiom holds trivially: only edges from the traces are
used in $G_C$.

\RULEONE: Let $T\in \mathcal{T}$ and $\sigma_1,\sigma_2 \in T$. Let
$k=d_T(\sigma_1,\sigma_2)$. We show that $G_C$ fulfills \RULEONE,
namely, there exists a path of length $k$ in $G_C$. Induction on
$k$: ($k=1$:) By the definition of $G_C$, $\{\sigma_1,\sigma_2\} \in
E_C$ thus there exists a path of
  length one between $\sigma_1$ and $\sigma_2$.
($k> 1$:) Suppose \RULEONE\ holds up to $k-1$. Let
$\sigma'_1,\ldots,\sigma'_{k-1}$ be
  the intermediary nodes between $\sigma_1$ and $\sigma_2$ in $T$:
  $T=(\ldots,\sigma_1,\sigma'_1,\ldots,\sigma'_{k-1},\sigma_2,\ldots)$. By the induction hypothesis, in $G_C$ there is a path of
  length $k-1$ between $\sigma_1$ and $\sigma'_{k-1}$. Let $\pi$ be this path. By definition of
  $G_C$, $\{\sigma'_{k-1},\sigma_2\} \in E_C$. Thus appending $(\sigma'_{k-1},\sigma_2)$ to $\pi$ yields the desired path of length $k$
  linking $\sigma_1$ and $\sigma_2$: \RULEONE\ thus holds up to $k$.

\RULETWO: We have to show that $d_T(\sigma_1,\sigma_2) = k
\Rightarrow d_{C}(\sigma_1,\sigma_2)\geq\lceil \alpha\cdot k
\rceil$.
By contradiction, suppose that $G_C$ does not fulfill \RULETWO\ with
respect to $\alpha$. So there exists $k' < \lceil \alpha\cdot k
\rceil$ and $\sigma_1,\sigma_2 \in V_C$ such that
$d_C(\sigma_1,\sigma_2)=k'$. Let $\pi$ be a shortest path between
$\sigma_1$ and $\sigma_2$ in $G_C$. Let $(T_1, \ldots, T_{\ell})$ be
the corresponding (maybe repeating) traces covering this path $\pi$
in $G_C$. Let $T_i \in (T_1, \ldots, T_{\ell})$, and let $s_i$ and
$e_i$ be the corresponding start and end nodes of $\pi$ in $T_i$. We
will show that this path $\pi$ implies the existence of a path in
$G_0$ which violates $\alpha$-consistency. Since $G_0$ is
inferrable, $G_0$ fulfills \RULETWO, thus we have:
$d_C(\sigma_1,\sigma_2)=\sum_{i=1}^{\ell} d_{T_i}(s_i,e_i) = k' <
\lceil \alpha\cdot k \rceil \leq d_{G_0}(\sigma_1,\sigma_2)$ since
$G_0$ is $\alpha$-consistent. However, $G_0$ also fulfills \RULEONE,
thus $d_{T_i}(s_i,e_i) \geq d_{G_0}(s_i,e_i)$. Thus
$\sum_{i=1}^{\ell} d_{G_0}(s_i,e_i) \leq \sum_{i=1}^{\ell}
d_{T_i}(s_i,e_i)< d_{G_0}(\sigma_1,\sigma_2)$: we have constructed a
path from $\sigma_1$ to $\sigma_2$ in $G_0$ whose length is shorter
than the distance between $\sigma_1$ and $\sigma_2$ in $G_0$,
leading to the desired contradiction.

\subsection{Proof of Lemma~\ref{lemma:allstars}}

First we construct a topology $G_0=(V_0,E_0)$ and then describe a
trace set on this graph that generates the star graph $G=(V,E)$. The
node set $V_0$ consists of $|V|$ anonymous nodes and $|V|\cdot
(1+\tau)$ named nodes, where $\tau=\lceil 3/(2\alpha) - 1/2\rceil$.
The first building block of $G_0$ is a copy of $G$. To each node
$v_i$ in the copy of $G$ we add a chain consisting of $2+\tau$
nodes, first appending $\tau$ non-anonymous nodes $w_{(i,k)}$ where
$1 \leq k \leq \tau$, followed by an anonymous node $u_i$ and
finally a named node $w_{(i,\tau+1)}$. More formally we can describe
the link set as $E_0 = E \cup \bigcup_{i=1}^{|V|} \left(
\{v_i,w_{(i,1)}\}, \{w_{(i,1)}, w_{(i,2)}\},\ldots,
\{w_{(i,\tau)},u_i\}, \{u_i,w_{(i,\tau+1)}\}\right)$. The trace set
$\mathcal{T}$ consists of the following $|V|+|E|$ shortest path
traces: the traces $T_{\ell}$ for $\ell\in\{1,\ldots, |V|\}$, are
given by $T_{\ell}(w_{(\ell,\tau)},w_{(\ell,\tau+1)})$ (for each
node in $V$), and the traces $T_{\ell}$ for
${\ell}\in\{|V|+1,\ldots, |V|+|E|\}$, are given by
$T_{\ell}(w_{(i,\tau)},w_{(j,\tau)})$ for each link $\{v_i,v_j\}$ in
$E$. Note that $G_0=G_C$ as each star appears as a separate
anonymous node. The star graph $G_*$ corresponding to this trace set
contains the $|V|$ nodes $*_i$ (corresponding to $u_i$). In order to
prove the claim of the lemma we have to show that two nodes
$*_i,*_j$ are conflicting according to
Lemma~\ref{lem:lemmaInferencemulti} if and only if there is a link
$\{v_i,v_j\}$ in $E$.
Case~$(i)$ does not apply because the minimum distance between any
two nodes in the canonic graph is at least one, and $\lceil
\alpha\cdot d_{T_i}(*_i,w_{(i,\tau)})\rceil=1$ and $\lceil
\alpha\cdot d_{T_i}(*_i,w_{(i,\tau+1)})\rceil=1$. It remains to
examine Case~$(ii)$: ``$\Rightarrow$'' if $\map(*_i)=\map(*_j)$
there would be a path of length two between $w_{(i,\tau)}$ and
$w_{(j,\tau)}$ in the topology generated by $\map$; the trace set
however contains a trace $T_\ell(w_{(i,\tau)},w_{(j,\tau)})$ of
length $2\tau+1$. So $\lceil \alpha\cdot d_{T_\ell}
(w_{(i,\tau)},w_{(j,\tau)})\rceil =\lceil\alpha\cdot(2\tau+1)\rceil
=\lceil \alpha\cdot(2\lceil 3/(2\alpha) - 1/2\rceil +1\rceil) \geq
3$, which violates the $\alpha$-consistency
(Lemma~\ref{lem:lemmaInferencemulti}~(ii)) and hence $\{*_i,*_j\}\in
E_*$ and $\{v_i,v_j\}\in E$. ``$\Leftarrow$'': if
$\{v_i,v_j\}\not\in E$, there is no trace
$T(w_{(i,\tau)},w_{(j,\tau)})$, thus we have to prove that no trace
$T_\ell(w_{(i',\tau)},w_{(j',\tau)})$ with $i'\neq i$ and $j'\neq j$
and $j'\neq i$ leads to a conflict between $*_i$ and $*_j$.  We show
that an even more general statement is true, namely that for any
pair of distinct non-anonymous nodes $x_1,x_2$, where
$x_1,x_2\in\{v_{i'},v_{j'},w_{(i',k)}, w_{(j',k)}|1 \leq k \leq
\tau+1, i'\neq i, j'\neq i, j'\neq j\}$, it holds that
$\lceil\alpha\cdot d_{C}(x_1,x_2)\rceil\leq
d_C(x_1,*_i)+d_C(x_2,*_j)$. Since $G_C=G_0$ and the traces contain
shortest paths only, the trace distance between two nodes in the
same trace is the same as the distance in $G_C$. The following
tables contain the relevant lower bounds on distances in $G_C$ and
$\mu(x_1,x_2)=d_C(x_1,*_i)+d_C(x_2,*_j)$.

\begin{footnotesize}
\begin{table}[h]
    \centering
        \begin{tabular}{ c | c  c  c  c }
            $d_{C}(\cdot,\cdot)\geq$ &  $v_{i'}$ & $v_{j'}$ & $w_{(i',k_1)}$ & $w_{(j',k_1)} $\\
            \hline
            $v_{i'}$ & 0 & 1 & $k_1$ & $k_1 + 1$\\
           $v_{j'}$ & 1 & 0 & $k_1$ + 1 & $k_1$\\
           $w_{(i',k_2)}$ & $k_2$ & $k_2 + 1$ & $|k_2 - k_1|$ & $k_1 + 1 + k_2$\\
            $w_{(j',k_2)}$ &$k_2 + 1$ & $k_2$ & $k_1 + 1 + k_2$ & $|k_2 - k_1|$\\
            $*_{i}$ & $ \tau+2$ & $ \tau+1$ & $ 2 + \tau + k_1$ & $\tau-k_1+1$\\
            $*_{j}$ & $ \tau+2$ & $ \tau+2$ & $ 2 + \tau + k_1$ & $ 2 + \tau + k_1$\\
         \multicolumn{5}{c}{~~}\\
                $\mu(\cdot,\cdot)\geq$ &    $v_{i'}$ & $v_{j'}$ & $w_{(i',k_1)}$ & $w_{(j',k_1)} $\\
            \hline
            $v_{i'}$ & $2\tau+4$ & $ 2\tau+3$ & $ 4+2\tau+k_1$ & $4+2\tau+k_1$\\
            $v_{j'}$ & $2\tau+3$ & $2\tau+4$ & $ 2\tau+3+k_1$ & $3+2\tau+k_1$\\
            $w_{(i',k_2)}$ & $ 4 + 2\tau + k_2$ & $ 4 + 2\tau + k_2$ & $4+2\tau+k_1+k_2$ & $ 4 + 2\tau + k_1+k_2$\\
            $w_{(j',k_2)}$ & $2\tau-k_2+3$ & $2\tau-k_2+3$ & $2\tau+3+k_1-k_2$ & $2\tau+k_1-k_2+3$\\
        \end{tabular}
        \caption{\bf Proof of Lemma~\ref{lemma:allstars}: lower bounds for the distances in $G_C$, and lower bounds for $\mu(x_1,x_2)=d_C(x_1,*_i)+d_C(x_2,*_j)$.}
\end{table}
\end{footnotesize}

If $x_1\neq w_{(j',k_2)}$ then it holds for all $x_1,x_2$ that
$d_{T_\ell} (x_1,x_2) \leq2 \tau+1$ whereas $\mu(x_1,x_2) =
d_C(x_1,*_i) + d_C(x_2,*_j) \geq 2 \tau+2$. In all other cases it
holds at least that $d_{C}(x_1,x_2) < \mu(x_1,x_2)$. Thus
$\lceil\alpha\cdot d_{C}(x_1,x_2)\rceil\leq
d_C(x_1,*_i)+d_C(x_2,*_j)$. Consequently, we have conflicts if and
only if $\{v_i,v_j\}\in E$, which concludes the proof.

\subsection{Proof of Lemma~\ref{thm:bijection}}

\begin{wrapfigure}{l}{0.42\textwidth}
        \vspace*{-.0cm}
    \includegraphics[width=0.38\textwidth]{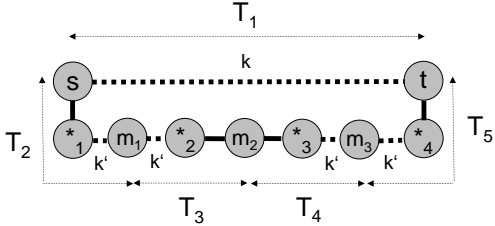}\caption{\bf Visualization for proof of Lemma~\ref{thm:bijection}.
    Solid lines denote links, dashed lines denote paths (of annotated length).}\label{fig:gilles-pic}
    \vspace*{-.5cm}
\end{wrapfigure}

We have to show that the paths in the traces correspond to paths in
$G_{\gamma}$. Let $T \in \mathcal{T}$, and $\sigma_1,\sigma_2 \in
T$. Let $\pi$ be the sequence of nodes in $T$ connecting $\sigma_1$
and $\sigma_2$. This is also a path in $G_{\gamma}$: since
$\alpha>0$, for any two symbols $\sigma_1,\sigma_2 \in T$, it holds
that $\map(\sigma_1)\neq \map(\sigma_2)$ as $\alpha>0$.


%

We now construct an example showing that the $\alpha'$ for which
$G_{\gamma}$ fulfills \RULETWO\ can be arbitrarily small. Consider
the graph represented in Figure~\ref{fig:gilles-pic}. Let
$T_1=(s,\ldots,t), T_2=(s,*_1,\ldots,m_1), T_3=(m_1,\ldots,*_2,m_2),
T_4=(m_2,*_3,\ldots,m_3), T_5=(m_3,\ldots, *_4,t)$. We assume
$\alpha=1$. By changing parameters $k=d_C(s,t)$ and
$k'=d_C(m_1,*_1)=d_C(m_1,*_2)=d_C(m_3,*_3)=d_C(m_3,*_4)$, we can
modulate the links of the corresponding star graph $G_*$.  Using
$d_{T_1}(s,t)=k$, observe that $k>2 \Leftrightarrow \{*_1,*_4\}\in
E_*$. Similarly, $k>2(k'+1) \Leftrightarrow \{*_1,*_3\}\in E_*
\wedge \{*_2,*_4\}\in E_*$ and $k>2(k'+2) \Leftrightarrow
\{*_1,*_2\}\in E_* \wedge \{*_3,*_4\}\in E_*$. Taking $k=2k'+4$, we
thus have $E_*=\{\{*_1,*_3\},\{*_2,*_4\},\{*_1,*_4\}\}$.

Thus, we here construct a situation where $*_1$ and $*_2$ as well as
$*_3$ and $*_4$ can be merged without breaking the consistency
requirement, but where merging both simultaneously leads to a
topology $G'$ that is only $4/k$-consistent, since $d_{G'}(s,t)=4$.
This ratio can be made arbitrarily small provided we choose
$k'=(k-4)/2$.

\subsection{Proof of Lemma~\ref{lemma:nofnodesedges}}

In the worst-case, each star in the trace represents a different
node in $G_1$, so the maximal number of nodes in any topology in
$\mathcal{G}_{\mathcal{T}}$ is the total number of non-anonymous
nodes plus the total number of stars in $\mathcal{T}$. This number
of nodes is reached in the topology $G_C$. According to
Definition~\ref{def:stargraph}, only non-adjacent stars in $G_*$ can
represent the same node in an inferrable topology. Thus, the stars
in trace $\mathcal{T}$ must originate from at least $\gamma(G_*)$
different nodes. As a consequence $|V_1|-|V_2|\leq s - \gamma(G_*)$,
which can reach $s-1$ for a trace set
$\mathcal{T}=\{T_i=(v,*_i,w)|1\leq i\leq s\}$. Analogously,
$|V_1|/|V_2|\leq (n+s)/(n+\gamma(G_*))\leq (2+s)/3$.

Observe that each occurrence of a node in a trace describes at most
two edges. If all anonymous nodes are merged into $\gamma(G_*)$
nodes in $G_1$ and are separate nodes in $G_2$ the difference in the
number of edges is at most $2(s-\gamma(G_*))$. Analogously,
$|E_1|/|E_2|\leq (\nu+2s)/(\nu+2)\leq s$. The trace set
$\mathcal{T}=\{T_i=(v,*_i,w)|1\leq i\leq s\}$ reaches this bound.

\subsection{Proof of Lemma~\ref{lemma:distortion}}

An ``lower bound'' example follows from
Figure~\ref{fig:consistency}. Essentially, this is also the worst
case: note that the difference in the shortest distance between a
pair of nodes $u$ and $v$ in $G_1$ and $G_2$ is only greater than 0
if the shortest path between them involves at least one anonymous
node. Hence the shortest distance between such a pair is two. The
longest shortest distance between the same pair of nodes in another
inferred topology visits all nodes in the network, i.e., its length
is bounded by $N-1$.

\subsection{Proof of Lemma~\ref{lemma:degree}}

Each occurrence of a node in a trace describes at most two links
incident to this node. For the degree difference we only have to
consider the links incident to at least one anonymous node, as the
number of links between non-anonymous nodes is the same in $G_1$ and
$G_2$. If all anonymous nodes can be merged into $\gamma(G_*)$ nodes
in $G_1$ and all anonymous nodes are separate in $G_2$ the
difference in the maximum degree is thus at most $2(s-\gamma(G_*))$,
as there can be at most $s-\gamma(G_*)+1$ nodes merged into one node
and the minimal maximum degree of a node in $G_2$ is two. This bound
is tight, as the trace set $T_i=\{v_i,*,w_i\}$ for $1\leq i\leq s$
containing $s$ stars can be represented by a graph with one
anonymous node of degree $2s$ or by a graph with $s$ anonymous nodes
of degree two each. For the ratio of the maximal degree we can
ignore links between non-anonymous nodes as well, as these only
decrease the ratio. The highest number of links incident at node $v$
with one endpoint in the set of anonymous nodes is $s-\gamma(G_*)+1$
for non-anonymous nodes and $2(s-\gamma(G_*)+1)$ for anonymous
nodes, whereas the lowest number is two.

\subsection{Proof of Lemma~\ref{lemma:degree_u}}
The proof for the upper bound is analogous to the case without full
exploration. To prove that this bound can be reached, we need to
add traces to the trace set to ensure that all pairs of named nodes appear in
the trace but does not change the degrees of anonymous nodes. To this end we
add a named node $u$ for each pair $\{v,w\}$ that is not in the trace set yet
to $G_0$ and a trace $T=\{v,u,w\}$. This does not increase the maximum degree
and guarantees full exploration.

\subsection{Proof of Lemma~\ref{lemma:diam_u}}

We first prove the upper bound for the relative case. Note that the
maximal distance between two anonymous nodes $\map(*_1)$ and
$\map(*_2)$ in an inferred topology component cannot be larger than
twice the distance of two named nodes $u$ and $v$: from
Definition~\ref{def:FE} we know that there must be a trace in
$\mathcal{T}$ connecting $u$ and $v$, and the maximal distance
$\delta$ of a pair of named nodes is given by the path of the trace
that includes $u$ and $v$. Therefore, and since any trace starts and
ends with a named node, any star can be at a distance at a distance
$\delta/2$ from a named node. Therefore, the maximal distance
between $\map(*_1)$ and $\map(*_2)$ is $\delta/2+\delta/2$ to get to
the corresponding closest named nodes, plus $\delta$ for the
connection between the named nodes. As according to
Lemma~\ref{lemma:cc-fu-ex}, the distance between named nodes is the
same in all inferred topologies, the diameter of inferred topologies
can vary at most by a factor of two.

We now construct an example that reaches this bound. Consider a
topology consisting of a center node $c$ and four rays of length
$k$. Let $u_1, u_2, u_3, u_4$ be the ``end nodes'' of each ray. We
assume that all these nodes are named. Now add two chains of
anonymous nodes of length $2k+1$ between nodes $u_1$ and $u_2$, and
between nodes $u_3$ and $u_4$  to the topology. The trace set
consists of the minimal trace set to obtain a fully explored
topology: six traces of length $2k+1$ between each pair of end nodes
$u_1, u_2, u_3, u_4$. Now we add two traces of length $2k+1$ between
nodes $u_1$ and $u_2$, and between nodes $u_3$ and $u_4$. These
traces explore the anonymous chains and have the following shape:
$T_7=(u_1,
*_1,\ldots,*_k,\sigma,*_{k+1},\ldots,*_{2k}, u_2)$ and $T_8=(u_3,
*_{2k+1},\ldots,*_{3k},\sigma',*_{3k+1},\ldots,*_{4k}, u_4)$, where
$\sigma$ and $\sigma'$ are stars. Let $G_1=G_C$ and $G_2$ be the
inferrable graph where $\sigma$ and $\sigma'$ are merged. The
resulting diameters are $\diam(G_1)=4k+2$ and $\diam(G_2)=2k+1$.
Since $s=4k+2$, the difference can thus be as large as $s/2$. Note
that this construction also yields the bound of the relative
difference: $\diam(G_1)/\diam(G_2)=(4k+2)/(2k+1)=2$.

\subsection{Proof of Lemma~\ref{lemma:noftriangles_u}}

Given the number of stars $s$, we construct a trace set
$\mathcal{T}$ with two inferrable graphs such that in one graph the
number of triangles with anonymous nodes is $s(s-1)/2$ and in  the
other graph there are no such triangles. As a first step we add $s$
traces $T_i=(v_i,*_i,w)$ to the trace set $\mathcal{T}$, where
$1\leq i\leq s$. To make this trace set fully explored we add traces
for each pair $v_i,v_j$ to $\mathcal{T}$ as a second step, i.e.,
traces $T_{i,j} = (v_i,v_j)$ for $1\leq i\leq s$ and $1\leq j \leq
s$. The resulting trace set contains $s$ stars and none of the stars
are in conflict with each other. Thus the graph $G_1$ merging all
stars into one anonymous node is inferrable from this trace and the
number of triangles where the anonymous node is part of is
$s(s-1)/2$. Let $G_2$ be the canonic graph of this trace set. This
graph does not contain any triangles with anonymous nodes and hence
the difference $C(G_1)-C(G_2)$ is $s(s-1)/2$.

To see that the ratio can be unbounded look at the trace set
$\{(v,*_1,w),(u,*_2,w),(u,v)\}$. This set is fully explored since
all pairs of named nodes appear in a trace. The graph where the two
stars are merged has one triangle and the canonic graph has no
triangle.
\end{appendix}

\end{document}